\definecolor{dkgreen}{rgb}{0,0.6,0}
\definecolor{forestgreen}{RGB}{0,100,50}
\definecolor{redd}{RGB}{76,0,153}
\lstdefinestyle{customxml}
{language=XML,
keywordstyle=\color{green},
basicstyle=\ttfamily\scriptsize,
morekeywords={id, specType},
mathescape=true,
escapeinside={/*@}{@*/},
tagstyle=\color{forestgreen},
keywordstyle=\color{redd},
stringstyle=\ttfamily\color{blue},
frame=single,
numbers=none
}
\lstdefinestyle{customjava}
{language=java,
keywordstyle=\color{blue},
basicstyle=\ttfamily\scriptsize,
morekeywords={String},
mathescape=true,
escapeinside={/*@}{@*/},
commentstyle=\color{dkgreen},   
keywordstyle=[2]{\color{red}},
frame=single,
numbers=none
}
\def\cleardoublepage{\clearpage\if@twoside \ifodd\c@page\else
    \hbox{}
    \thispagestyle{empty}
    \newpage
    \if@twocolumn\hbox{}\newpage\fi\fi\fi}
\makeatother \clearpage{\pagestyle{plain}\cleardoublepage}
\newcommand{\fig}[2][]{Figure~\ref{fig:#2}\ensuremath{#1}}
\newcommand{\prop}[1]{Proposition~\ref{prop:#1}}
\newcommand{\mdash}[1][]{---#1}
\newcommand{\ie}[1][\xspace]{i.e.,#1}
\newcommand{\eg}[1][\xspace]{e.g.,#1}
\newcommand{\bydef}[1]{\ensuremath{\stackrel{def}{#1}}}
\newcommand{\oftype}{\ensuremath{\!:\!}}
\newcommand{\non}[1]{\ensuremath{\overline{#1}}}
\newcommand{\true}{\ensuremath{\mathit{true}}}
\newcommand{\false}{\ensuremath{\mathit{false}}}
\newcommand{\require}{\ensuremath{\ \ \mathbf{Require}\ \ }}
\newcommand{\accept}{\ensuremath{\ \ \mathbf{Accept}\ \ }}
\newcommand{\cB}{\ensuremath{\mathcal{B}}}
\newcommand{\cC}{\ensuremath{\mathcal{C}}}
\newcommand{\cD}{\ensuremath{\mathcal{D}}}
\newcommand{\sN}{\ensuremath{\mathbb{N}}}
\newcommand{\cT}{\ensuremath{\mathcal{T}}}
\newlength{\ruleht}
\newcommand{\compi}[1]{\ensuremath{\overline{#1}\:}}
\newcommand{\remove}[1]{}
\newcounter{lctr}
\newcommand{\cm}{\ensuremath{\Gamma}}
\definecolor{darkgreen}{rgb}{0.05, 0.5, 0.06}
\definecolor{dkgreen}{rgb}{0,0.6,0}
\definecolor{forestgreen}{RGB}{0,100,50}
\definecolor{redd}{RGB}{76,0,153}
\lstdefinestyle{customxml}
{language=XML,
keywordstyle=\color{green},
basicstyle=\ttfamily\scriptsize,
morekeywords={id, specType},
mathescape=true,
escapeinside={/*@}{@*/},
tagstyle=\color{forestgreen},
keywordstyle=\color{redd},
stringstyle=\ttfamily\color{blue},
frame=single,
numbers=none
}
\lstdefinestyle{customjava}
{language=java,
keywordstyle=\color{blue},
basicstyle=\ttfamily\scriptsize,
morekeywords={String},
mathescape=true,
escapeinside={/*@}{@*/},
commentstyle=\color{dkgreen},   
keywordstyle=[2]{\color{red}},
frame=single,
numbers=none
}
\newtheorem{theorem}{Theorem}[section]
\newtheorem{example}[theorem]{Example}
\newtheorem{proposition}[theorem]{Proposition}
\newtheorem{corollary}[theorem]{Corollary}
\newtheorem{lemma}[theorem]{Lemma}
\title{DesignBIP: A Design Studio for Modeling and Generating Systems with BIP}
\author{Anastasia Mavridou 
\institute{Institute for Software Integrated Systems\\
Vanderbilt University\\
Nashville, TN, USA}
\email{anastasia.mavridou@vanderbilt.edu}
\and
Joseph Sifakis 
\institute{Verimag-B\^atiment IMAG\\
Universit\`e Grenoble Alpes\\
38401 St Martin d`H\`eres, France
}
\email{Joseph.Sifakis@imag.fr
}
\and
Janos Sztipanovits 
\institute{Institute for Software Integrated Systems\\
Vanderbilt University\\
Nashville, TN, USA
}
\email{janos.sztipanovits@vanderbilt.edu
}
}
\begin{document}
\setlength{\marginparwidth}{2.2cm}

\maketitle
\begin{center}
  Published in the proceedings of 1st International Workshop on\\ Methods and Tools for Rigorous System Design (MeTRiD).
 \end{center}

\begin{abstract}
The Behavior-Interaction-Priority (BIP) framework --- rooted in rigorous semantics --- allows the construction of systems that are correct-by-design. BIP has been effectively used for the construction and analysis of large systems such as robot controllers and satellite on-board software. Nevertheless, the specification of BIP models is done in a purely textual manner without any code editor support. To facilitate the specification of BIP models, we present DesignBIP, a web-based, collaborative, version-controlled design studio. To promote model scaling and reusability of BIP models, we use a graphical language for modeling parameterized BIP models with rigorous semantics. We present the various services provided by the design studio, including model editors, code editors, consistency checking mechanisms, code generators, and integration with the JavaBIP tool-set.

\end{abstract}

\section{Introduction}


Modeling languages are often used for designing complex systems. Using dedicated design studios allows increasing the understandability and usability of modeling languages, as well as decreasing development costs by eliminating errors at design time. 
%
%
Design studio components can be organized in the following three categories: 1)~\emph{semantic integration}, 2)~\emph{service integration}, and 3)~\emph{tool integration}. Semantic integration components comprise the domain of the modeling language, \ie its \emph{metamodel} that explicitly specifies the building blocks of the language and their relations. Service integration components include dedicated model editors, code editors, and GUI/Visualization components for modeling and simulating results. Additionally, service integration components include model transformation and code generation services, model repositories, and version control services. Finally, tool integration components consist in integrated tools such as run-times and verification tools.

Figure~\ref{fig:studioFlow} shows the main steps of the workflow of a design studio. Initially, models are designed using dedicated model editors. Optionally, design patterns stored in model repositories may be used to simplify the modeling process. 
Next, the checking loop starts (step 1), where the models are checked for conformance. If the required conformance conditions are not satisfied by the model, the checking mechanism must point back to the problematic nodes of the model in the model editor and inform the developer of the inconsistency causes to facilitate model refinement. Finally, when the conformance conditions are satisfied (step 2), the refined models may be analyzed and/or executed (step 3) by using integrated, into the design studio, third party tools. The output of the tools is then collected and sent back to the model editors (step 4) for visualization of analysis or execution results.

\begin{figure} [t]
  \centering
  \includegraphics[scale=0.35]{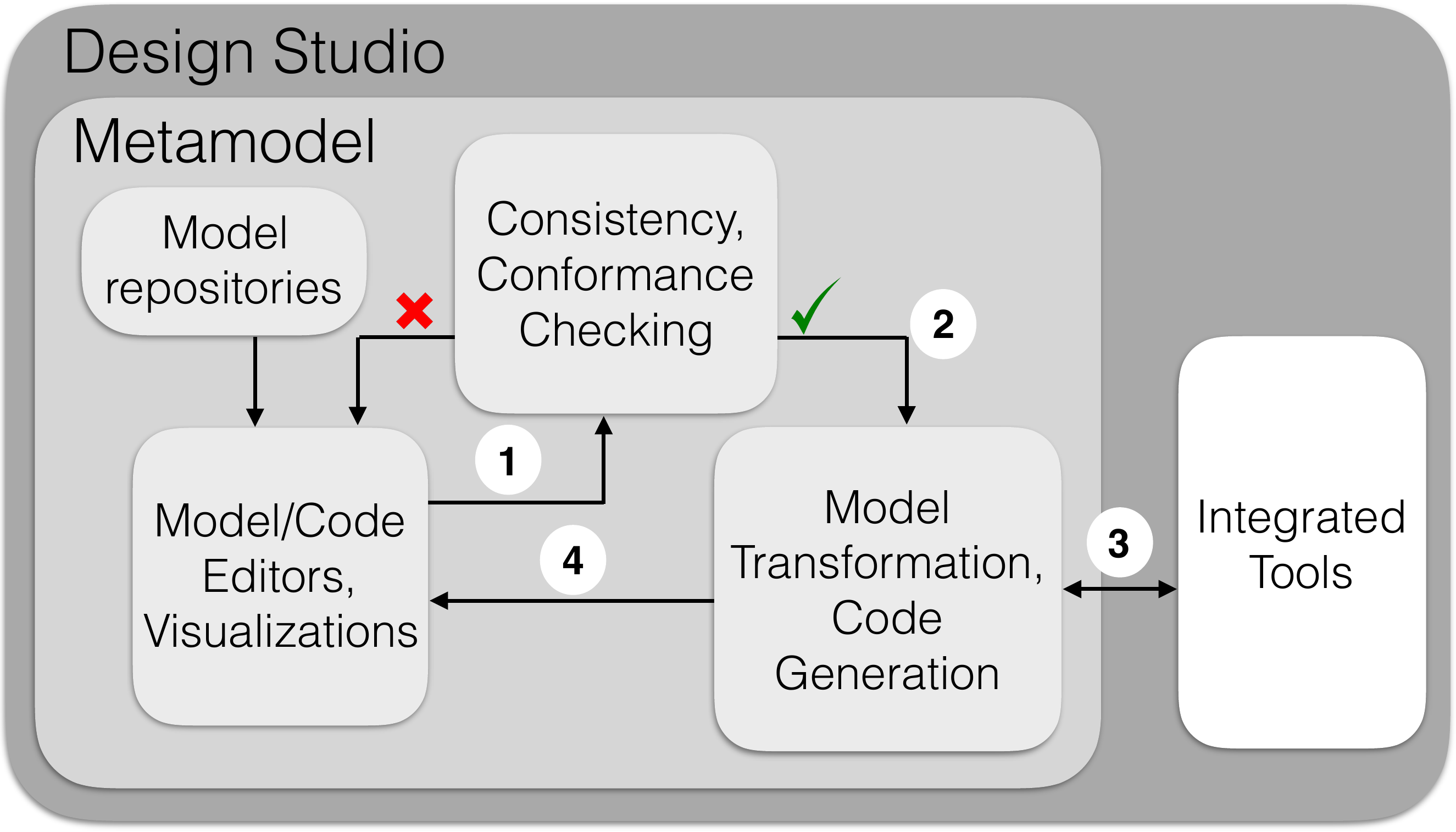}
  \caption{Main design studio components and work-flow}
  \label{fig:studioFlow}
\end{figure}

We present the DesignBIP studio for modeling and generating systems with the BIP~\cite{Basu11-RCBSD} framework. BIP comprises a language with rigorous operational semantics and a dedicated tool-set including code generators, run-time support tools, \ie BIP engines, and verification tools~\cite{dfinder, esst4bip}. Depending on the application domain, BIP offers several compilation chains, targeting different execution platforms and programming languages such as C++~\cite{Basu11-RCBSD} Java~\cite{SPE:SPE2495} Haskell, and Scala~\cite{edelmannfunctional}.

The specification of models in the BIP framework is done by using the BIP language in a textual manner~\cite{BIPGrammar} without offering any dedicated code editors. Thus, developing large systems with the BIP toolset can be challenging and error prone. In DesignBIP we have opted for a graphical language to enhance readability and easiness of expression. DesignBIP offers a complete modeling solution, in which we have integrated the tools offered by JavaBIP, the Java-based implementation of BIP~\cite{SPE:SPE2495}. Relying on the observation that systems are usually built from multiple instances of the same component type, we propose a parameterized graphical language for BIP that enhances scalability and reduces the model size.

DesignBIP is a web-based, collaborative, version controlled design studio based on~WebGME~\cite{maroti2014next}. DesignBIP allows real-time collaboration between multiple developers. Project changes are committed and versioned, which enables branching, merging and viewing the history of a project. DesignBIP\footnote{https://cps-vo.org/group/BIP} is easily accessed through a web interface and is open source\footnote{https://github.com/anmavrid/DesignBIP}.
Our contributions are as follows:
\begin{compactitem}
\item We extend \emph{architecture diagrams}~\cite{MavridouBBS16}, a graphical parameterized language, to accommodate the specification of BIP parameterized models. 
\item We prove a set of necessary and sufficient conditions for checking the encodability of parameterized BIP graphical models into logical formulas. 
\item We study the model transformation from graphical models to logical formulas and develop code generation plugins.
\item We develop dedicated BIP model editors, code editors, and model repositories.
\item We integrate the JavaBIP engine and provide visualization of its output.
\end{compactitem}



\emph{Paper organization:}
Section 2 describes the BIP language. Section 3 describes the parameterized graphical language of DesignBIP. Section 4 describes  service integration components, \ie model and code editors, model repositories, and code generators. Section 5 describes the integration with the JavaBIP engine. 
Section 6 discusses related work. Section 7 discusses concluding remarks and future work.

\section{The BIP Language}
\label{secn:bip}

Component behavior in BIP is described by Labeled Transition Systems (LTS). 
LTS transitions are of three types: \emph{enforceable}, \emph{spontaneous}, and \emph{internal}\footnote{JavaBIP includes all three, whereas BIP1 and BIP2 include the enforceable and internal types.}. 
Enforceable transitions are handled by the BIP-engine and are labeled with \emph{ports}. Ports form the interface of a component and are used to define interactions with other components. 
 Spontaneous transitions take into account changes in the environment and, thus, they are not handled by the BIP-engine but rather executed after detection of external events. Finally, internal transitions allow a component to update its state based on internal information. 
 
 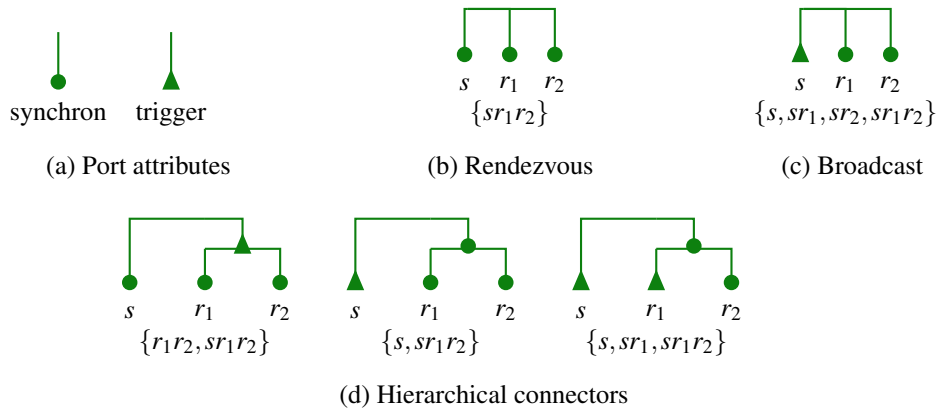
\begin{figure*} [t]
\centering
\hfill
\begin{subfigure}[b]{0.23\textwidth}
  \begin{tikzpicture}[shorten >=1pt,node distance=.7cm,>=stealth'
      ,initial text=
      ,every state/.style={draw=green,thick}
      ,group/.style = {draw=darkgreen, thin, rectangle, minimum width=2.5cm}
      ,port/.style = {font=\small, minimum size=5mm}
      ,legend/.style = {font=\bf}
    ]

    \node[port] (rndv2) {synchron};
    \node[port, node distance=1.5cm] (brdc2) [right of=rndv2]{trigger};
    \draw [style=-*, thick, darkgreen] ($(rndv2.north)+(0,.8cm)$) -| (rndv2.north);
    \draw [style=-triangle 45 reversed, thick, darkgreen] ($(brdc2.north)+(0,.8cm)$) -| (brdc2.north);
  \end{tikzpicture}
  \caption{Port attributes}
  \label{fig:attributes}
\end{subfigure}
\hfill
\begin{subfigure}[b]{0.18\textwidth}
  \centering
  \begin{tikzpicture}[shorten >=1pt,node distance=.7cm,>=stealth'
      ,initial text=
      ,every state/.style={draw=darkgreen,thick}
      ,group/.style = {draw=darkgreen,thin,rectangle, minimum width=2.5cm}
      ,port/.style = {font=\small, minimum size=5mm}
      ,legend/.style = {font=\bf}
    ]

    \node[port] (h){};

    \node[port, node distance=.3cm] (ar) [left of=h]{$s$};
    \node[port, node distance=.3cm] (br) [right of=h]{$r_1$};
    \node[port, node distance=.6cm] (cr) [right of=br]{$r_2$};
    \draw [style=-*, thick, darkgreen]($(h.west)+(0,1cm)$)-|(ar.north);
    \draw [style=-*, thick, darkgreen] ($(h.west)+(0,1cm)$) -|(br.north);
    \draw [style=-*, thick, darkgreen] ($(h.west)+(0,1cm)$) -|(cr.north);

    \node[port, node distance=1em] (rndv) [below of=br]{$\{sr_1r_2\}$};
  \end{tikzpicture}
  \caption{Rendezvous}
  \label{fig:rendezvous}
\end{subfigure}
\hfill
\begin{subfigure}[b]{0.18\textwidth}
  \begin{tikzpicture}[shorten >=1pt,node distance=.7cm,>=stealth'
      ,initial text=
      ,every state/.style={draw=darkgreen,thick}
      ,group/.style = {draw=darkgreen,thin,rectangle, minimum width=2.5cm}
      ,port/.style = {font=\small, minimum size=5mm}
      ,legend/.style = {font=\bf}
    ]

    \node[port] (h2){};
    \node[port, node distance=.3cm] (abr) [left of=h2]{$s$};
    \node[port, node distance=.3cm] (bbr) [right of=h2]{$r_1$};
    \node[port, node distance=.6cm] (cbr) [right of=bbr]{$r_2$};
    \draw [style=-triangle 45 reversed, thick, darkgreen]($(h2.west)+(0,1cm)$)-|(abr.north);
    \draw [style=-*, thick, darkgreen] ($(h2.west)+(0,1cm)$) -|(bbr.north);
    \draw [style=-*, thick, darkgreen] ($(h2.west)+(0,1cm)$) -|(cbr.north);

    \node[port, node distance=1em] (brdc) [below of=bbr]{$\{s, sr_1, sr_2, sr_1r_2\}$};
  \end{tikzpicture}
  \caption{Broadcast}
  \label{fig:broadcast}
\end{subfigure}
\hspace*{\fill}

\bigskip
\begin{subfigure}[b]{0.62\textwidth}
  \begin{tikzpicture}[shorten >=1pt,node distance=.7cm,>=stealth'
      ,initial text=
      ,every state/.style={draw=darkgreen,thick}
      ,group/.style = {draw=darkgreen,thin,rectangle, minimum width=2.5cm}
      ,port/.style = {font=\small, minimum size=5mm}
      ,legend/.style = {font=\bf}
    ]

    \node[port] (start) {};
    \node[port, node distance=1.5cm](h4)[left of=start]{};

    \node[port, node distance=1cm] (ab) [below of=h4] {$r_1$};
    \node[port, node distance=1cm] (ac) [right of=ab]{$r_2$};
    \node[port, node distance=1cm] (aa) [left of=ab]{$s$};

    \node[port, node distance=1.4cm] (leg4) [below of=h4] {$\{r_1r_2, sr_1r_2\}$};

    \draw [style=-*, thick, darkgreen] ($(h4.south)+(0,.1cm)$) -- ++(right:.5cm) -| (ac.north);
    \draw [style=-*, thick, darkgreen]  ($(h4.south)+(0,.1cm)$) -| (ab.north);
    \draw [style=-triangle 45 reversed, thick, darkgreen] ($(h4.south)+(0,.5cm)$) -- ++(right:.2cm) -| ($(h4.south)+(+.5,0cm)$);
    \draw [style=-*, thick, darkgreen]  ($(h4.south)+(0,.5cm)$) -| (aa.north);


    \node[port, node distance=1.5cm](baa) [right of=start]{};
    \node[port, node distance=1cm] (bb) [below of=baa]{$r_1$};
    \node[port, node distance=1cm] (bc) [right of=bb]{$r_2$};
    \node[port, node distance=1cm] (ba) [left of=bb]{$s$};

    \draw [style=-*, thick, darkgreen] ($(baa.south)+(0,.1cm)$) -- ++(right:.5cm) -| (bc.north);
    \draw [style=-*, thick, darkgreen]  ($(baa.south)+(0,.1cm)$) -| (bb.north);
    \draw [style=-*, thick, darkgreen] ($(baa.south)+(0,.5cm)$) -- ++(right:.2cm) -| ($(baa.south)+(+.5,0cm)$);
    \draw [style=-triangle 45 reversed, thick, darkgreen]  ($(baa.south)+(0,.5cm)$) -| (ba.north);

    \node[port, node distance=1.4cm] (leg5) [below of=baa] {$\{s, sr_1r_2\}$};


    \node[port, node distance=4.5cm](caa) [right of=start]{};
    \node[port, node distance=1cm] (cb) [below of=caa]{$r_1$};
    \node[port, node distance=1cm] (cc) [right of=cb]{$r_2$};
    \node[port, node distance=1cm] (ca) [left of=cb]{$s$};

    \draw [style=-*, thick, darkgreen] ($(caa.south)+(0,.1cm)$) -- ++(right:.5cm) -| (cc.north);
    \draw [style=-triangle 45 reversed, thick, darkgreen]  ($(caa.south)+(0,.1cm)$) -| (cb.north);
    \draw [style=-*, thick, darkgreen] ($(caa.south)+(0,.5cm)$) -- ++(right:.2cm) -| ($(caa.south)+(+.5,0cm)$);
    \draw [style=-triangle 45 reversed, thick, darkgreen]  ($(caa.south)+(0,.5cm)$) -| (ca.north);

    \node[port, node distance=1.4cm] (leg6) [below of=caa] {$\{s, sr_1, sr_1r_2\}$};
  \end{tikzpicture}
  \caption{Hierarchical connectors}
  \label{fig:hierarchical}
\end{subfigure}
\caption{BIP connectors and their associated interaction sets}
\label{fig:connectors}
\end{figure*}
 
An \emph{interaction} in BIP is a non-empty set of ports that defines allowed synchronization of actions among components. BIP interactions represent a clean, abstract concept of \emph{architecture} which is separated from component behavior.
Interaction models can be represented in many equivalent ways. Among these are connectors~\cite{BliSif07-acp-emsoft} and Boolean formulas~\cite{bliudze2010causal} on variables representing port participation in interactions. Connectors are most appropriate for graphical design and interaction representation, whereas Boolean formulas are most appropriate for efficient encoding and manipulation by the BIP-engine. 

BIP connectors contain ports, which form their interface. Each port of a connector has an attribute \emph{trigger} (represented by a triangle, \fig{attributes}) or \emph{synchron} (represented by a bullet, \fig{attributes}). 
Given a connector involving a set of ports $\{p_1,...,p_n\}$, the set of its interactions is defined as follows: an interaction is any non-empty subset of $\{p_1,...,p_n\}$ which contains some port that is a trigger (\fig{broadcast}); otherwise, (if all ports are synchrons) the only possible interaction is the maximal one that is, $\{p_1,...,p_n\}$ (\fig{rendezvous}). 
The same principle is recursively extended to hierarchical connectors, where one interaction from each subconnector is used to
form an allowed interaction according to the synchron/trigger typing of the connector nodes (\fig{hierarchical}).  

Alternatively, \emph{interaction logic} can be used to define interaction models.
The propositional interaction logic (PIL) is defined by the grammar:
\begin{align*}
\phi ::= \true\ |\ p\ |\ \compi \phi\ |\ \phi \vee \phi
\,,
\end{align*}
with any $p \in P$, where $P$ is the set of ports of a BIP system. Conjunction  is defined as follows:  $\phi_1 \wedge \phi_2 \bydef{=} \compi{(\compi{\phi_1} \vee \compi{\phi_2})}$. 
To simplify notation, we omit conjunction in monomials, \eg writing $sr_1r_2$ instead of $s \wedge r_1 \wedge r_2$.
%
%
Let $\gamma$ be a non-empty set of interactions. The meaning of a PIL formula $\phi$ is defined by the satisfaction relation: $\gamma \models \phi$ iff for all $a \in \gamma$, $\phi$ evaluates to $\true$ for the valuation induced by $a$: $p=\true$, for all $p\in a$ and $p=\false$, for all $p \not\in a$.

Consider the \emph{Star architecture} shown in \fig{starPIL}, where a single component $C$ acts as the center, and three other components $S_1,\ S_2,\ S_3$ communicate with the center through binary rendezvous connectors
. Component $C$ has a single port $p$ and all other components have a single port $q_i\ (i=1,2,3)$. The corresponding PIL formula is: $p q_1\compi{q_2}\compi{q_3} \vee p \compi{q_1} q_2\compi{q_3} \vee p \compi{q_1} \compi{q_2} q_3$.

\begin{figure} [t]
  \centering
  \includegraphics[scale=4]{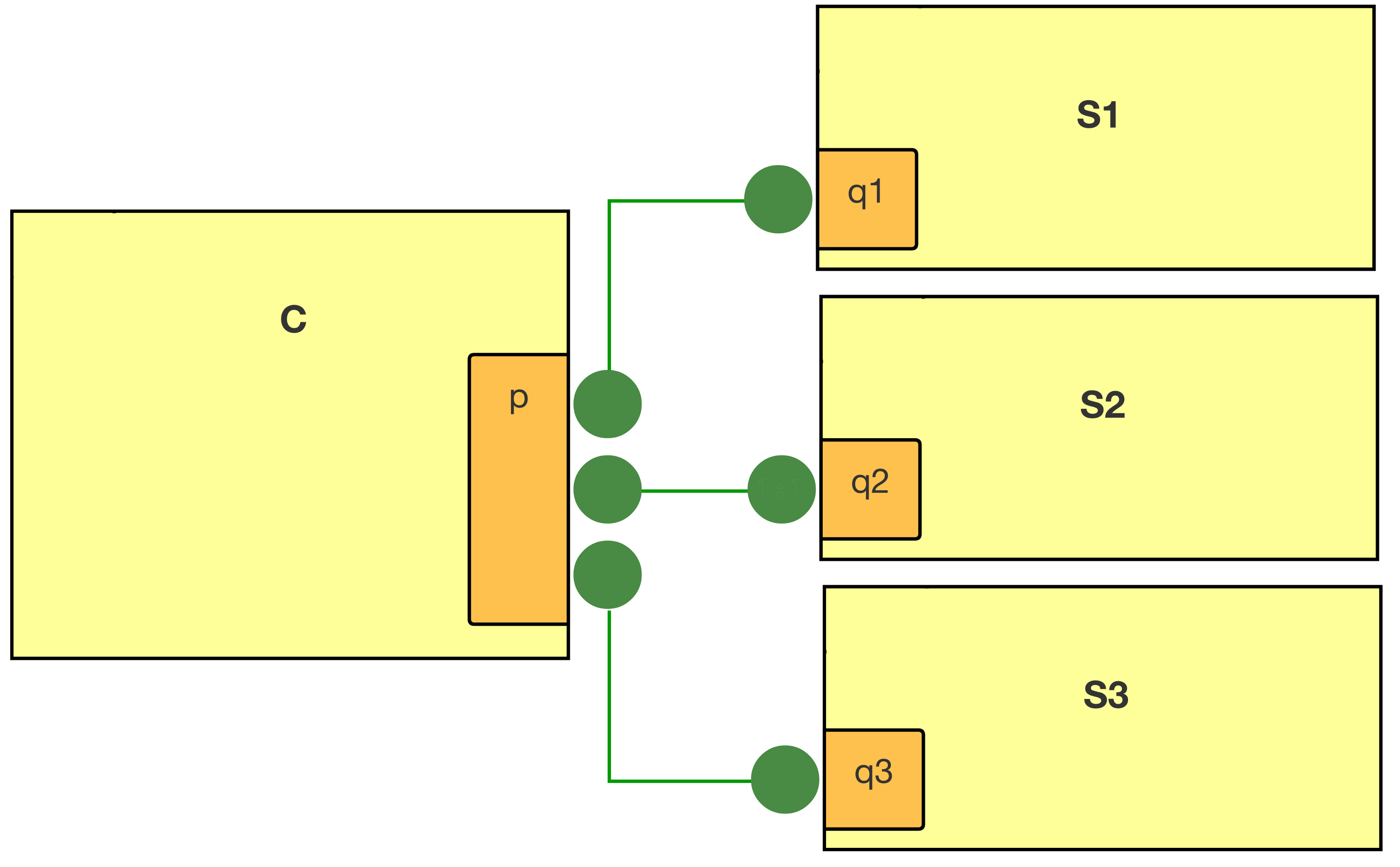}
  \caption{A Star architecture}
  \label{fig:starPIL}
\end{figure}

To define interactions independently from the number of component instances, PIL can be extended with quantification over components~\cite{dybip}. This extension is particularly useful because, in practice, systems are built from multiple component instances of the same component type.
Similarly to~\cite{dybip}, JavaBIP uses a macro-notation based on FOIL that includes two macros. 

The \textbf{Require} macro defines ports required for interaction. Let $T_1, T_2 \in \cT$ be two component types. For instance:
\begin{align*}
  &T_1.p \require T_2.q\ T_2.q\ ;\ T_2.r\,,
\end{align*}
means that, to participate in an interaction, each of the ports $p$ of component instances of type $T_1$ requires either the participation of \emph{precisely two} of the ports $q$ of component instances of type $T_2$ or one instance of $r$. Notice the semicolon in the macro that separates the two options. 

The \textbf{Accept} macro defines optional ports for participation, \ie it defines the boundary of interactions. This is expressed by explicitly excluding from interactions all the ports that are not optional. 
For instance, if ${p,q,r}$  is the set of port types of component types $T_1, T_2 \in \cT$ then:
\begin{align*}
	\label{eq:accept}
	 &T_1.p \accept T_2.q\,,
\end{align*}
means that instances of $r$ are excluded from interaction with instances of $p$.
%
%
To illustrate the use of the macros, let us define the Star architecture style with Require/Accept:
\begin{align*}
  \mathtt{S.q} &\require \mathtt{C.p}
  &\mathtt{S.q} &\accept \mathtt{C.p}
  \\
  \mathtt{C.p} &\require \mathtt{S.q}
  &\mathtt{C.p} &\accept \mathtt{S.q}
\end{align*}

The syntax and semantics of first-order interaction logic (FOIL) as well as the Require/Accept macronotation are presented in greater detail in Appendix \ref{secn:foil}.

\section{Semantic Integration Components}

We present the BIP parameterized graphical language that was integrated in DesignBIP. The DesignBIP metamodel can be found in Appendix~\ref{sec:metamodel}.
\subsection{Architecture Diagrams for BIP}
\label{sec:simplediagrams}


Architecture diagrams~\cite{MavridouBBS16} is a parameterized graphical language for the description of 
the structure of a system by showing the system's component types and their attributes for coordination.
We extend the definition of architecture diagrams with triggers and synchrons to define BIP connectors. 
A BIP architecture diagram consists of a set of \emph{component types} and a set of \emph{connector motifs}. Each component type $T$ is characterized by a set of \emph{port types} $T.P$ and a \emph{cardinality} parameter $n$, which specifies the number of instances of $T$. 
 \fig{ex1} shows an architecture diagram consisting of two component types $T_1$ and $T_2$ with $n_1$ and $n_2$ instances and port types $p$ and $q$, respectively.  Instantiated components have port instances $p_i$, $q_j$ for $i,j$ belonging to the intervals $[1,n_1]$, $[1,n_2]$, respectively.
 
 \begin{figure} [t]
	\centering
	\includegraphics[scale=1]{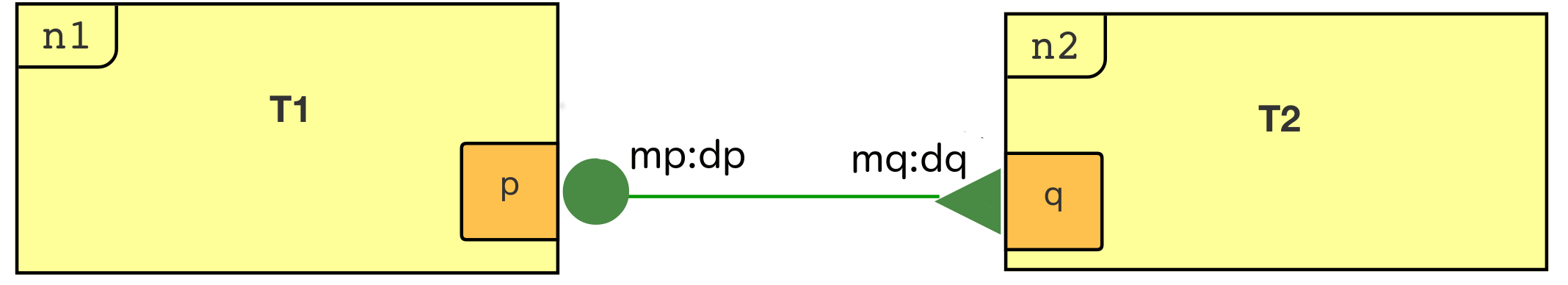}
	\caption{A BIP architecture diagram}  	
  	\label{fig:ex1}
\end{figure}
\begin{figure} [t]
	\centering
	\includegraphics[scale=1]{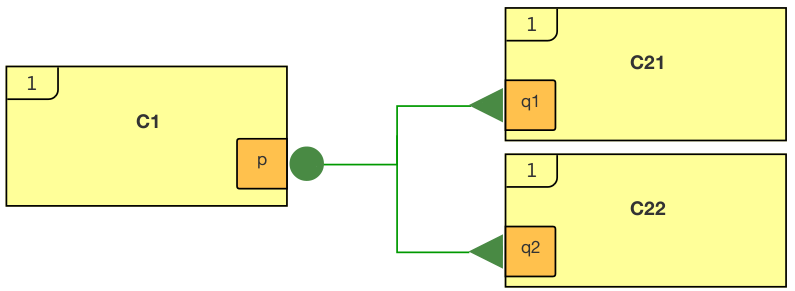}
  	\caption{A conforming architecture to the diagram in \fig{ex1}}
  	\label{fig:obtArchEx1}
\end{figure}

Connector motifs are non-empty sets of port types.
Each port type $p$ in a connector motif has two constraints represented as a pair $m:d$. Multiplicity $m$ of a port type constrains the number of port instances of this type that are involved in each connector defined by the connector motif. Degree $d$ of a port type constrains the number of connectors attached to every port instance of this type. Additionally, each port type has a typing (attribute) represented by $t_p$, which can be either \emph{trigger} (represented by a triangle) or \emph{synchron} (represented by a bullet) (see BIP connectors in Section \ref{secn:bip}). A connector motif  defines a set of possible configurations, where a configuration is a non-empty set of connectors. The meaning of a diagram is the union of all configurations corresponding to each connector motif of the diagram. Let us present the semantics of connector motifs through the example of \fig{ex1}, which has a single connector motif involving port types $p$ and $q$. 

\fig{obtArchEx1} shows the unique configuration obtained from the diagram of \fig{ex1} by taking $n_1=1$, $m_p=1$, $d_p=1$; $n_2=2$, $m_q=2$ and $d_q=1$. This is the result of composition of constraints for port types $p$ and $q$. 
 For instance, since the multiplicity of $q$ is $2$, then both $q_1$ and $q_2$ must be involved in the the same connector. The degrees of $p$ and $q$ are equal to $1$, thus there is exactly one connector attached to their port instances. Port instances retain the typing of their corresponding port types. The set of interactions defined by the connector in \fig{obtArchEx1} is the following: $\{q_1, q_2, q_1q_2, q_1p, q_2p, q_1q_2p\}$. 



Formally, a {\em BIP architecture diagram} $\cD\ = \langle \cT, \cC \rangle$ consists of:
\begin{itemize}
\item a set of \emph{component types} $\cT = \{T_1, \dots, T_k\}$ of the form $T = (T.P, n)$, where $T.P \neq \emptyset$ is the set of \emph{port types} of component type $T$ and $n \in \sN$ is the \emph{cardinality} parameter associated to component type $T$
\item a set of \emph{connector motifs} $ \cC = \{\cm_1, \dots, \cm_l\}$ of the form  $\cm =(a, \{m_p:d_p,\ t_p\}_{p \in a})$, where
\begin{itemize}
\item $\emptyset \neq a \subset \bigcup_{i=1}^k T_i.P$ is a set of port types
\item  $m_p, d_p \in \sN$ (with $m_p > 0$) are the \emph{multiplicity} and \emph{degree} associated to port type $p \in a$
\item $t_p \in \{synchron,\ trigger\}$ is the typing of port type $p \in a$
\end{itemize}
\end{itemize}


For a component $c \in \cB$ and a component type $T$, we say that {\em $c$ is of type $T$} if the ports of $c$ are in a bijective correspondence with the port types in $T$.

An architecture $\langle \cB, \gamma \rangle$ \emph{conforms} to a diagram $\langle \cT, \cC \rangle$  if, for each $i \in [1,k]$, 
  the number of components of type $T_i$ in $\cB$ is equal to $n_i$ and
  $\gamma$ can be partitioned into disjoint sets $\gamma_1,\dots,\gamma_l$, such that, for each connector motif $\cm_i =(a, \{m_p:d_p,\ t_p\}_{p \in a}) \in \cC$ 
  and each $p \in a$, 
\begin{enumerate}
\item in each connector in $\gamma_{i}$ there are exactly $m_p$ instances of $p$ typed as $t_p$,
\item each instance of $p$ is involved in exactly $d_p$ connectors in $\gamma_{i}$
\end{enumerate}
The meaning of a BIP architecture diagram is the set of all architectures that conform to it.

\subsubsection{Conformance Conditions}
\label{sec:conformance}
DesignBIP encodes connector motifs in the Require/Accept macronotation (Section \ref{secn:bip}) in order to give the latter as input to the integrated JavaBIP-engine. Nevertheless, the semantic domains of BIP architecture diagrams and interaction logic (Section \ref{secn:bip}) do not coincide. An architecture diagram defines a set of configurations, whereas, an interaction logic formula defines exactly one configuration. Consider the architecture diagram shown in \fig{nontranslatablediagram} with a single connector motif, which defines two configurations: $\gamma_1 = \{p_1 q_1, p_2 q_2\}$ and $\gamma_2 = \{p_1 q_2, p_2 q_1\}$, and thus, cannot be encoded into interaction logic. 
Let us now consider the architecture diagram shown in Figure 
\ref{fig:translatable2}, which is a variation of the diagram shown in \fig{nontranslatablediagram} with degrees set to $d_p=d_q=2$. This diagram defines exactly one configuration and thus, can be encoded into interaction logic. In particular, it defines the configuration $\gamma = \{p_1 q_1, p_2 q_2, p_1 q_2, p_2 q_1\}$.  This shows that we can restrict an architecture diagram to define exactly one configuration by constraining its multiplicities or degrees.

\begin{figure}
\centering
\includegraphics[scale=1]{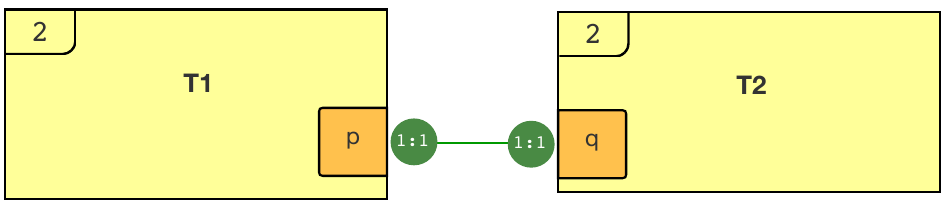}
\caption{An architecture diagram that cannot be encoded into FOIL}
\label{fig:nontranslatablediagram}
\end{figure}

\begin{figure} [t]
	\centering
	\includegraphics[scale=1]{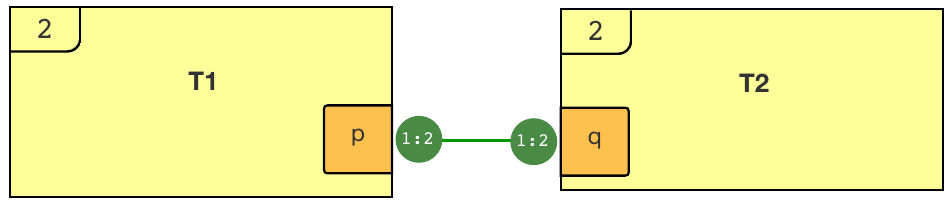}
  	\caption{An architecture diagram that can be encoded into FOIL}
  	\label{fig:translatable2}
\end{figure}
 
We denote $s_p = n_p \cdot d_p / m_p\ \in \mathbb{N}$ the \emph{matching factor} of a port type $p$, where $n_p$ is the cardinality of the component type that contains $p$. The matching factors of all port types participating in the same connector motif must be equal integers, in which case they represent the number of connectors defined by the connector motif. The maximum number of distinct connectors defined by a connector motif $\cm = (a, \{m_p:d_p,\ t_p\}_{p \in a})$ is equal to $\prod_{q \in a} {\binom{n_q}{m_q}}$. 
Consider the connector motif shown in \fig{nontranslatablediagram}. The matching factors of its port types are $s_p = s_q = 2$ and are not equal to ${\binom{2}{1}} \cdot {\binom{2}{1}}$, which represents the maximum number of connectors that can be defined by this connector motif. The matching factors of the connector motif shown in Figure \ref{fig:translatable2} is $4$. 

\prop{srequirement} provides the necessary and sufficient conditions for a BIP diagram to define exactly one conforming architecture for each evaluation of its cardinality parameters. If these conditions hold, then the diagram can be encoded into FOIL. The encoding conditions are as follows: 1) the multiplicity of a port type must be less than or equal to the number of component instances that contain this port and 2) the matching factors of all port types participating in the same connector motif must be equal to the maximum number of connectors that the connector motif defines. Since, by the semantics of diagrams, connector motifs correspond to disjoint sets of connectors, these conditions are applied separately to each connector motif. 
The proof of Proposition \ref{prop:srequirement} can be found in Appendix C. Corollary \ref{col:srequirement} follows directly from Proposition \ref{prop:srequirement}.

\begin{proposition}
\label{prop:srequirement}
A BIP architecture diagram has exactly one conforming architecture iff, for each connector motif $\cm = (a, \{m_p:d_p,\ t_p\}_{p \in a})$ and each $p \in a$, we have 1) $m_p \le n_p$ and 2)~$s_p = \prod_{q \in a} {\binom{n_q}{m_q}}$.
\end{proposition}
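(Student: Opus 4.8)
The statement is a biconditional, so the plan is to establish the two directions by a careful combinatorial analysis of a single connector motif $\cm = (a, \{m_p:d_p,\ t_p\}_{p\in a})$ in isolation; since the semantics of diagrams partitions $\gamma$ into disjoint sets $\gamma_1,\dots,\gamma_l$ (one per motif), "exactly one conforming architecture" holds iff each motif admits exactly one conforming configuration $\gamma_i$, and these choices are independent. So I fix one motif and study the set of configurations $\gamma_i$ compatible with the constraints (1) "each connector contains exactly $m_p$ instances of $p$, typed $t_p$" and (2) "each instance of $p$ lies in exactly $d_p$ connectors of $\gamma_i$."

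First I would record the counting identity behind the matching factor: double-counting the incidences between $p$-instances and connectors in $\gamma_i$ gives $|\gamma_i|\cdot m_p = n_p\cdot d_p$, hence $|\gamma_i| = s_p = n_p d_p/m_p$ for every $p\in a$; in particular a necessary condition for any conforming configuration to exist is that all $s_p$ coincide. The largest a connector motif's configuration can possibly be is the set of \emph{all} connectors that satisfy constraint~(1): a connector is determined by choosing, for each $p\in a$, an $m_p$-subset of the $n_p$ instances of $p$, so there are at most $\prod_{q\in a}\binom{n_q}{m_q}$ admissible connectors. This already gives the forward direction: if there is exactly one conforming architecture then in particular at least one configuration exists, so all $s_p$ are equal; and $m_p\le n_p$ is forced since otherwise $\binom{n_p}{m_p}=0$ and no connector (hence no configuration, contradicting existence) is admissible; finally, if $s_p < \prod_{q\in a}\binom{n_q}{m_q}$ for the common value $s_p$, I would exhibit two distinct configurations, contradicting uniqueness — the cleanest way is to argue that the full set of $\prod_q\binom{n_q}{m_q}$ admissible connectors is regular (every $p$-instance appears in the same number of admissible connectors, namely $\binom{n_p-1}{m_p-1}\prod_{q\ne p}\binom{n_q}{m_q}$) so it is itself a valid configuration only when $s_p$ equals the full count; when $s_p$ is strictly smaller, one can pick two different $s_p$-subsets of the admissible connectors that are still $d_p$-regular at every instance (a symmetry/averaging argument, e.g. permuting instance labels and then Baranyai-type or direct ad hoc decomposition shows the uniform "design" can be split), yielding non-uniqueness.

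For the converse, assume $m_p\le n_p$ and $s_p = \prod_{q\in a}\binom{n_q}{m_q}$ for all $p\in a$. Then the counting identity forces $|\gamma_i| = \prod_{q\in a}\binom{n_q}{m_q}$, which is exactly the number of admissible connectors; since $\gamma_i$ is a set of admissible connectors of that size, $\gamma_i$ must be the set of \emph{all} admissible connectors — there is literally no choice. It remains to check this canonical $\gamma_i$ actually satisfies constraint~(2), i.e. each $p$-instance lies in exactly $d_p$ of these connectors: counting the admissible connectors through a fixed instance of $p$ gives $\binom{n_p-1}{m_p-1}\prod_{q\ne p}\binom{n_q}{m_q}$, and using $s_p = n_p d_p/m_p = \prod_q\binom{n_q}{m_q}$ together with $\binom{n_p-1}{m_p-1} = \frac{m_p}{n_p}\binom{n_p}{m_p}$ this simplifies to $d_p$, as required. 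Assembling the per-motif uniqueness across all $l$ motifs (and noting the number of components of each type is simply fixed to $n_i$) gives exactly one conforming architecture.

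The main obstacle is the non-uniqueness half of the forward direction: showing that whenever the common matching factor $s_p$ is strictly below the maximum $\prod_q\binom{n_q}{m_q}$ there genuinely exist \emph{two} distinct valid configurations, not merely one. The subtle case is small parameters (e.g. $m_p = n_p$, forcing $\binom{n_p}{m_p}=1$, which is benign) versus the case $1 < m_p < n_p$ where a $d_p$-regular sub-family of admissible connectors of the prescribed size must be shown to exist in at least two essentially different ways; I expect to handle this by a label-permutation/orbit argument — acting by the product of symmetric groups $\prod_q S_{n_q}$ on the admissible connectors, any single valid configuration has a nontrivial orbit under this action unless it is the full set, and two points of that orbit are the two desired distinct configurations (one still has to check the action does not fix the configuration, which holds precisely because it is a proper nonempty subset invariant-averaging forces regularity, so a generic group element moves it).
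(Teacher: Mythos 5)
Your proposal is correct, and while it shares the paper's overall skeleton (reduce to a single motif, double-count incidences to get $|\gamma_i|=s_p=n_pd_p/m_p$ for every $p$, identify $\prod_{q\in a}\binom{n_q}{m_q}$ as the number of admissible connectors, and for sufficiency note that a configuration of maximal size must be the full set), it diverges on the one step that carries the real content. For non-uniqueness when $s_p<\prod_{q\in a}\binom{n_q}{m_q}$, the paper argues constructively: it fixes a missing connector $k^*$, proves a small degree-counting lemma (any connector containing $p_i$ but not $p_j$ is matched by one containing $p_j$ but not $p_i$), and performs a sequence of pairwise instance swaps turning some $k'\in\gamma$ into $k^*$ while preserving conformance, thereby exhibiting a second architecture containing $k^*$. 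You instead invoke the action of $\prod_q S_{n_q}$ on admissible connectors: this action is transitive, a proper nonempty subset cannot be invariant under a transitive action, so some relabeling $g$ sends a conforming $\gamma_i$ to a distinct conforming $g\gamma_i$. That is shorter and dispenses with the lemma entirely (your final justification sentence is muddled --- the reason a generic $g$ moves $\gamma_i$ is transitivity plus properness, not ``invariant-averaging''; and the Baranyai aside is unnecessary --- but the stated conclusion is sound), at the cost of being non-constructive. Conversely, your sufficiency direction is actually more complete than the paper's: you verify that the full set of admissible connectors is $d_p$-regular at every instance (via $\binom{n_p-1}{m_p-1}\prod_{q\neq p}\binom{n_q}{m_q}=d_p$), i.e.\ that a conforming architecture \emph{exists} under conditions 1) and 2), whereas the paper only argues that any conforming configuration must equal the full set and leaves existence implicit.
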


\begin{corollary}
\label{col:srequirement}
A BIP architecture diagram can be specified in FOIL using the Require/Accept macro-notation iff, for each connector motif $\cm = (a, \{m_p:d_p,\ t_p\}_{p \in a})$ and each $p \in a$, we have 1) $m_p \le n_p$ and 2) $s_p = \prod_{q \in a} {\binom{n_q}{m_q}}$.
\end{corollary}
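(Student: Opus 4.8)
The plan is to reduce the corollary to \prop{srequirement} by establishing that specifiability in the Require/Accept macronotation is equivalent to the diagram having exactly one conforming architecture. I would prove this equivalence as a biconditional in two directions and then invoke \prop{srequirement} to replace the phrase ``exactly one conforming architecture'' by the two numerical conditions, yielding the stated bi-implication.

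For the easy direction (specifiability $\Rightarrow$ conditions), I would appeal to the semantic gap already noted in the text: a Require/Accept specification, like any interaction-logic formula, denotes exactly one configuration, whereas a diagram denotes the set of all architectures conforming to it. Hence, if the diagram is specifiable in FOIL, its meaning must coincide with that single configuration, so the diagram has exactly one conforming architecture; \prop{srequirement} then supplies conditions (1) and (2).

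For the converse (conditions $\Rightarrow$ specifiability), I would use \prop{srequirement} to conclude that the diagram has a unique conforming architecture, and moreover that for each motif this architecture is the complete candidate set, namely all $\prod_{q \in a}\binom{n_q}{m_q}$ connectors selecting $m_q$ instances of each port type $q \in a$. Since, by the semantics of diagrams, distinct motifs correspond to disjoint connector sets, I would treat each motif separately. The core task is to exhibit a Require/Accept specification denoting exactly this set: for each motif $\cm = (a, \{m_p:d_p,\ t_p\}_{p \in a})$ and each $p \in a$, I would write a \require clause stating that an instance of $p$ synchronizes with precisely $m_q$ instances of each $q \in a$ (and $m_p - 1$ further instances of $p$ itself), together with an \accept clause excluding every port type outside $a$, thereby fixing the interaction boundary. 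Condition (1), $m_p \le n_p$, guarantees each such clause is satisfiable (the binomial $\binom{n_p}{m_p}$ is nonzero), and condition (2) guarantees that the generated set is neither a proper subset nor an overlapping variant of the full product. I would then verify that the conjunction of these clauses over all motifs denotes the union of the complete candidate sets, i.e. the unique conforming architecture.

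The main obstacle is this forward construction: I must check that the Require/Accept clauses just described denote \emph{precisely} the full product set and nothing more---in particular, that requiring $m_q$ instances per port type, with the boundary fixed by \accept, generates all $\prod_{q \in a}\binom{n_q}{m_q}$ combinations rather than some constrained subset. This is exactly where the equality $s_p = \prod_{q \in a}\binom{n_q}{m_q}$ of condition (2) does the work, since it certifies that the multiplicity and degree data of the motif are consistent with the maximal connector set that the macros produce, closing the gap between the graphical and the logical semantics.
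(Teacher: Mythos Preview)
Your proposal is correct and follows essentially the same route as the paper, which simply states that the corollary ``follows directly from Proposition~\ref{prop:srequirement}.'' The paper relies implicitly on the two ingredients you make explicit: the observation in the text that an interaction-logic formula denotes exactly one configuration (giving necessity), and the encoding of Algorithm~\ref{alg:macrosencoding} (giving sufficiency by exhibiting the Require/Accept clauses you sketch). Your write-up is more detailed than the paper's one-line justification, but the underlying argument is the same; the ``main obstacle'' you flag---verifying that the generated clauses denote exactly the full product set---is something the paper leaves to the reader via Algorithm~\ref{alg:macrosencoding} rather than proving formally.
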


\section{Service Integration Components}

We present the model and code editors, the code generators, and the model repositories of DesignBIP. 

\subsection{Model and Code editors}


  \begin{figure} [t]
 	\centering
 	\includegraphics[scale=0.5]{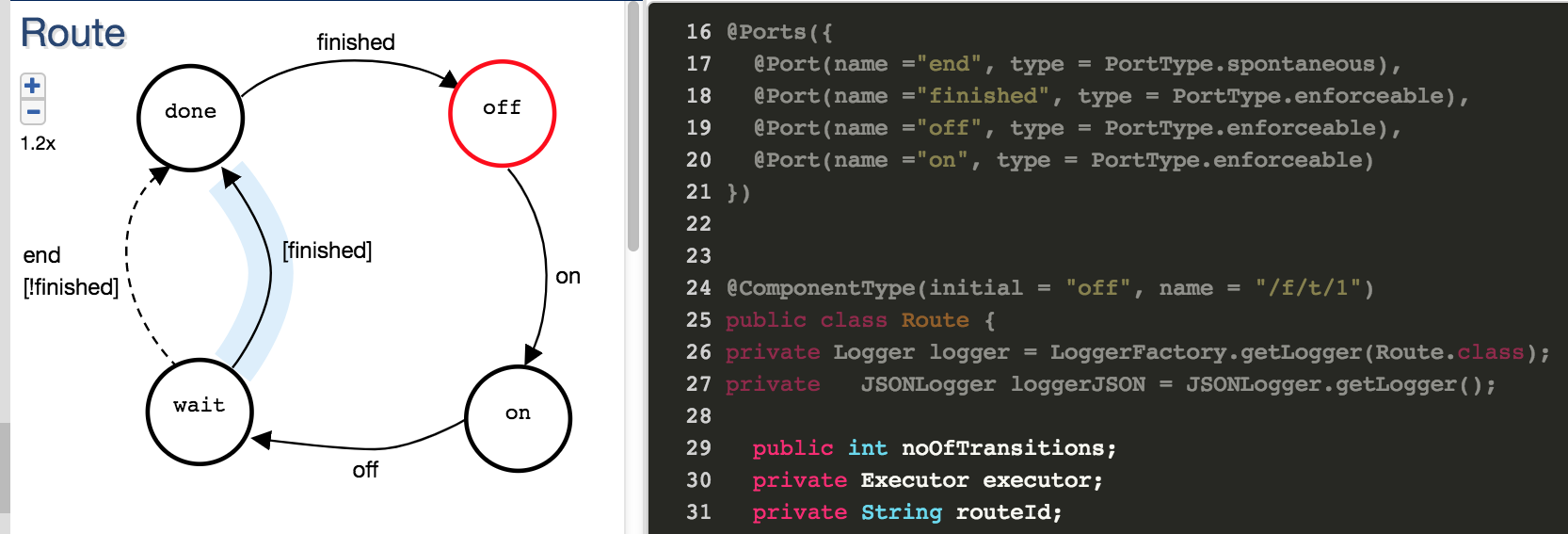}
 	\caption{DesignBIP LTS model editor and Java code editor}
   	\label{fig:editors}
 \end{figure}

A developer provides the system specification by using the dedicated model and Java code editors of DesignBIP. In particular, the developer must specify 1)~component behavior in the form of BIP LTS, 2)~component interaction in the form of BIP architecture diagrams and 3)~the actions associated with transitions and guards, as well as variable declarations directly in Java.
Figures \ref{fig:editors} and \ref{fig:editors2} present the DesignBIP LTS and BIP diagram model editors, as well as the Java code editor. In the code editor, the darker parts represent code that was automatically generated by the input given in the model editors, while the bright code parts represent input given directly in the code editor. In the LTS model editor, enforceable and internal transitions are illustrated with solid arrows, while spontaneous transitions are illustrated with dashed arrows. The code and model editors are tightly synchronized, \ie changes are instantaneously propagated.

   \begin{figure} 
 	\centering
 	\includegraphics[scale=0.45]{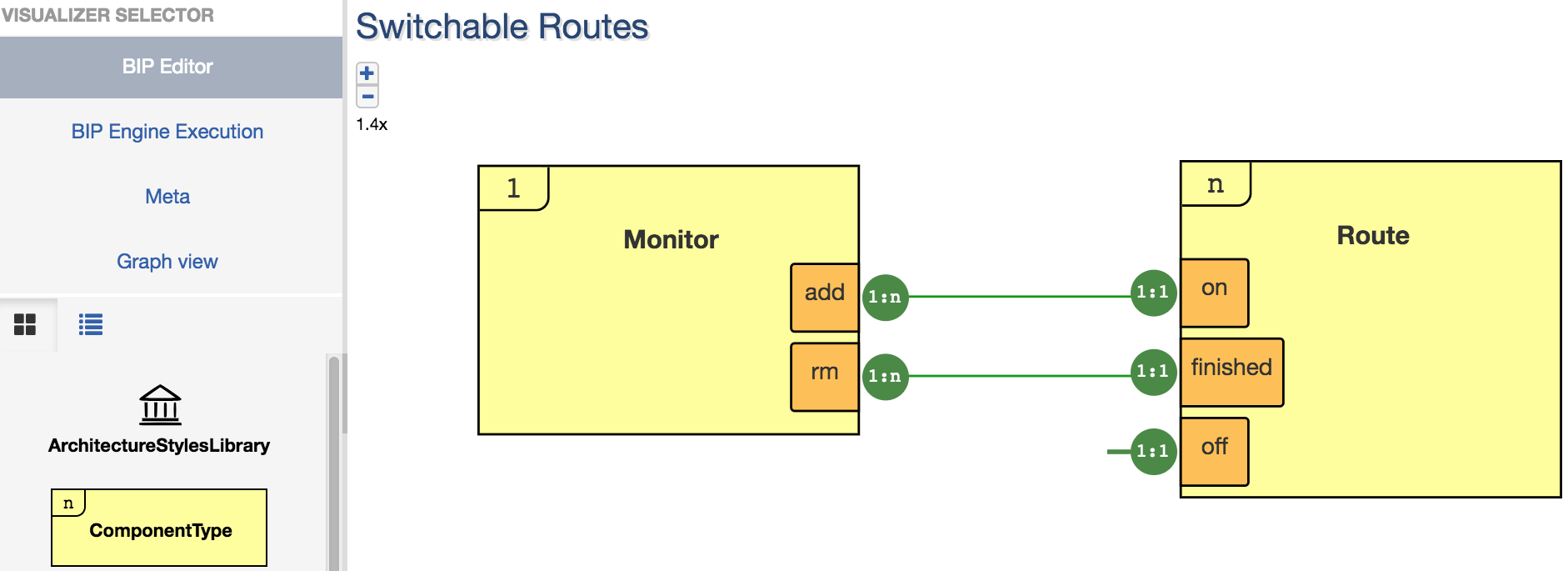}
 	\caption{DesignBIP architecture diagrams model editor}
   	\label{fig:editors2}
 \end{figure}

\subsection{Behavior generation plugin: LTS to Java code}

As explained earlier, a developer graphically specifies the LTS that represents component behavior. DesignBIP generates the Java code that describes the LTS specified by the developer. In particular, DesignBIP generates code in the form of Java annotations that describes the ports, component types, transitions, and guards of each LTS. For instance, Java annotations describing the ports of the \texttt{Route} component type (Figure~\ref{fig:editors2}) are shown in the right-hand side of Figure~\ref{fig:editors}. 

Firstly, before the plugin generates the Java code, it checks the correct instantiation of each specified LTS according to the constraints defined in the DesignBIP metamodel (Appendix~\ref{sec:metamodel}). For instance, the plugin checks whether each LTS has exactly one initial state. If errors exist, DesignBIP returns to the developer a message explaining the error and pointers (displayed as \texttt{Show node}) to the incorrect nodes of the specified model as shown in Figure~\ref{fig:errors}. In the case of a correct behavioral model, DesignBIP returns a set of Java files, i.e., one Java file for each specified component type. The complete generated Java annotations for the \texttt{Route} component type are presented in Appendix~\ref{sec:RouteJava}.

\begin{figure}
  \centering
  \includegraphics[scale=0.52]{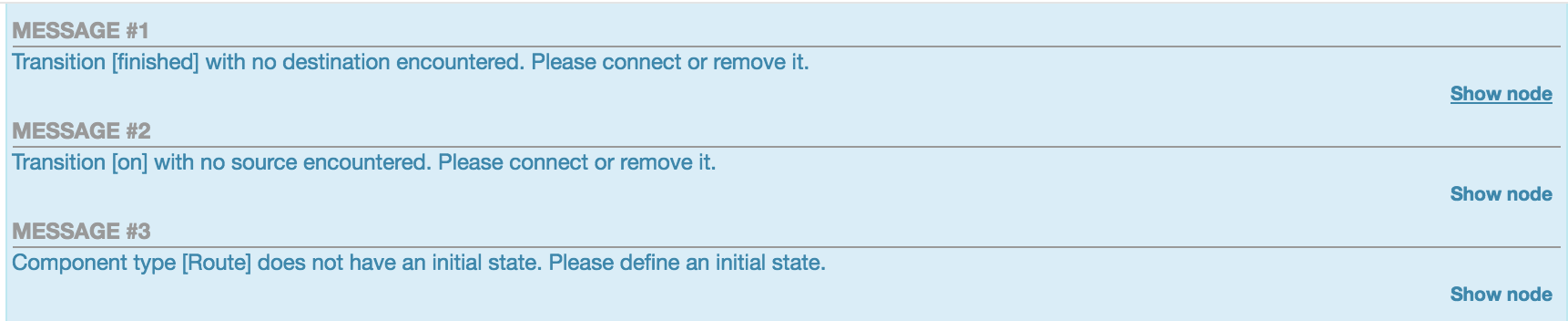}
  \caption{Behavioral errors as returned by DesignBIP}
  \label{fig:errors}
\end{figure}



\subsection{Interaction generation plugin: BIP architecture diagrams to XML code}

We propose Algorithm~\ref{alg:macrosencoding}, with polynomial-time complexity for the encoding of a BIP architecture diagram into Require/Accept macros (Section \ref{secn:bip}). For each port type, we instantiate two sets of variables: \texttt{require} and \texttt{accept}. For the sake of simplicity, we write $p$ instead of $T.p$.

\begin{algorithm}
 \footnotesize
\KwData{Diagram $\cD\ =\ \langle \cT, \cC \rangle$, where $\cC = \{ \Gamma_1, \dots, \Gamma_n \}$ and $\cm =(a, \{m_p:d_p,\ t_p\}_{p \in a})$}
\KwResult{Returns the macros for each port type in $\cD$}

$require  \longleftarrow \{\};$ 
$accept \longleftarrow \{\};$\\
/* for each port type p in the diagram */ \newline
\For{$p \in$ $\cT.P$}{
  $require[p] = $ $new\ Set\ ()$;
  $accept[p] = $ $new\ Set\ () $; \newline
  /* for all connector motifs attached to p */ \newline
  \For{$\Gamma \in p.connectorMotifs$}{
  /* if the connector motif is singleton */ \newline
    \If{$|a| == 1$}{
      $require[p].add(-)$;  $accept[p].add(-)$; 
    }
    \Else{
       /* if the multiplicity of end attached to $p$ is not $1$,
       add all ports of the connector motif */ \newline
      \If{$m_p > 1$}{
        $accept [p].add(a);$
      }
      /* otherwise add all ports excluding $p$ */ \newline
      \Else {
        $accept [p].add(a \setminus \{p\});$
      }
      /* if the end attached to $p$ is trigger */ \newline
     \If{$t_p ==\ $\textnormal{trigger}}{
      $require[p].add(-);$ \newline
      }
      /* else if there exists at least one trigger */ \newline
      \ElseIf{$\exists\ p \in a\ : t_p\ == \textnormal{trigger}$}{
      	\For{$q \in a$ $\wedge$ $t_q ==\ \textnormal{trigger}$}{
			/* for each trigger add an option */ \newline
				 $require$[$p$].$add(p);$ 
        }
      }
      /* else add all ports as many times as their multiplicity */ \newline
      \Else {
  		$optionRequire[p] = new List()$; \newline
        \For{$q \in a \setminus \{p\}$}{
               $optionRequire [p].add(\underbrace{qq\ldots q}_{m_q});$
        }
        $optionRequire[p].add(\underbrace{pp\ldots p}_{m_p - 1});$ \newline
        $require$[$p$].\textit{add(optionRequire}[$p$]);     
       } 
    }
  }
}
\Return $require$ and $accept$;
\caption{Encoding a BIP Diagram into Require/Accept Macros}
\label{alg:macrosencoding}
\end{algorithm}

The \texttt{accept} set of $p$ contains the right hand side of \accept and  is constructed as follows. For each connector motif attached to $p$, if its size is: 1)~equal to $1$, \ie singleton connector motif, then we add - in \texttt{accept}\footnote{The dash - indicates that $p$ must not synchronize with any other port.}; 2)~greater than $1$ and the multiplicity of $p$ is greater than $1$, we add in \texttt{accept} all port types of the connector motif including $p$; 3)~greater than $1$ and the multiplicity of $p$ is equal to $1$, we add all port types of the connector motif except for $p$ to \texttt{accept}.

The \texttt{require} set of $p$ contains the right hand side of~\require and is constructed as follows. For each connector motif attached to $p$, if its size is: 1)~ equal to $1$ or $p$ is typed as trigger then we add - to \texttt{require}\footnote{The dash - indicates that $p$ does not require any other port for synchronization.}; 2)~greater than $1$ and there exists at least one trigger, we add to \texttt{require} as many options as the number of triggers. In each option we add a trigger; 3)~greater than $1$ and there are no triggers, we add to \texttt{require} all port types of the connector motif except for $p$ as many times as their associated multiplicity and $m_p - 1$ times the port type $p$, to form a single option. 

Before generating the XML code, DesignBIP checks the conformance conditions presented in Section~{ref:conformance}. Additionally, DesignBIP checks the correct instantiation of the multiplicity and degree constraints of each connector motif. If errors exist, DesignBIP returns to the developer messages explaining the errors and pointers to the incorrect nodes of the model. In the case of a correct interaction model, DesignBIP returns an XML file with the generated code.
In Appendix~\ref{sec:RouteXML}, we present part of the generated XML code for the \texttt{Switchable Routes} example (Figure~\ref{fig:editors2}). 

\subsection{Model repositories}
To promote reusability in DesignBIP, each project is accompanied by component type and coordination pattern~\cite{mavridou2016satellite} repositories. For instance, let us consider the mutual exclusion coordination pattern shown in Figure~\ref{fig:mutex} that enforces the \textit{no two processes can use the shared resource simultaneously} coordination property. 
The shared resource is managed by the unique \mdash due to the cardinality being $1$ \mdash \texttt{Mutex Manager} component type. The multiplicities of all port types are $1$ and therefore, all connectors are binary. The degree constraints require that each port instance of a component of type \texttt{Process} be attached to a single connector and each port instance of the \texttt{Mutex Manager} be attached to $n$ connectors. The behaviors of the two component types enforce that once the resource is acquired by a component of type \texttt{Process}, it can only be released by the same component. 

\begin{figure}
  \centering
  \includegraphics[scale=1.8]{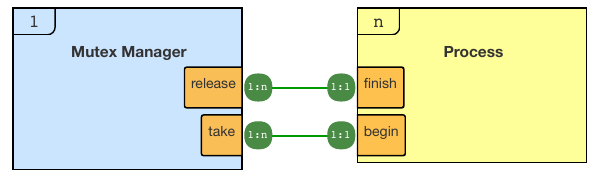}
  \caption{The mutual exclusion pattern}
  \label{fig:mutex}
\end{figure}

To use a coordination pattern, a developer needs to create an instance of the pattern in the model and evaluate its cardinality parameters. For instance, if the developer wants to enforce mutual exclusion on two instances of \texttt{Process} then $n$ must be set equal to $2$. 

\section{JavaBIP-engine Execution and Visualization}
\label{secn:engine}
After generating the system specification, the developer may use the integrated JavaBIP-engine to execute it. 
The JavaBIP-engine is offered through a dedicated plugin in DesignBIP. If the cardinality parameters of the component types have not been evaluated, the plugin asks the developer to provide the number of instances of each component type. It then instantiates the components and passes their reference to the JavaBIP-engine alongside with the generated Java and XML code. 
The plugin starts the JavaBIP-engine that runs the following three-step protocol in a cyclic manner:
1)~upon reaching a state, each component notifies the JavaBIP-engine about possible outgoing transitions, 2)~the JavaBIP-engine computes the possible interactions of the system, picks one, and notifies the involved components, 3)~the notified components execute the functions associated with the corresponding transitions. 

The output of the JavaBIP-engine (which transitions are picked at each execution cycle) is stored as a JSON object. When the plugin stops the execution of the engine (the execution time is defined by the developer), the output is sent back to the model editors of DesignBIP for simulation (see Figure~\ref{fig:engine}). Initially, the developer picks the subset of components whose execution wants to simulate. Starting by highlighting the initial states of these components, the visualizer shows which transitions are executed in each execution cycle by firstly highlighting the fired transitions and finally their destination states. 

\begin{figure}
  \centering
  \includegraphics[scale=0.42]{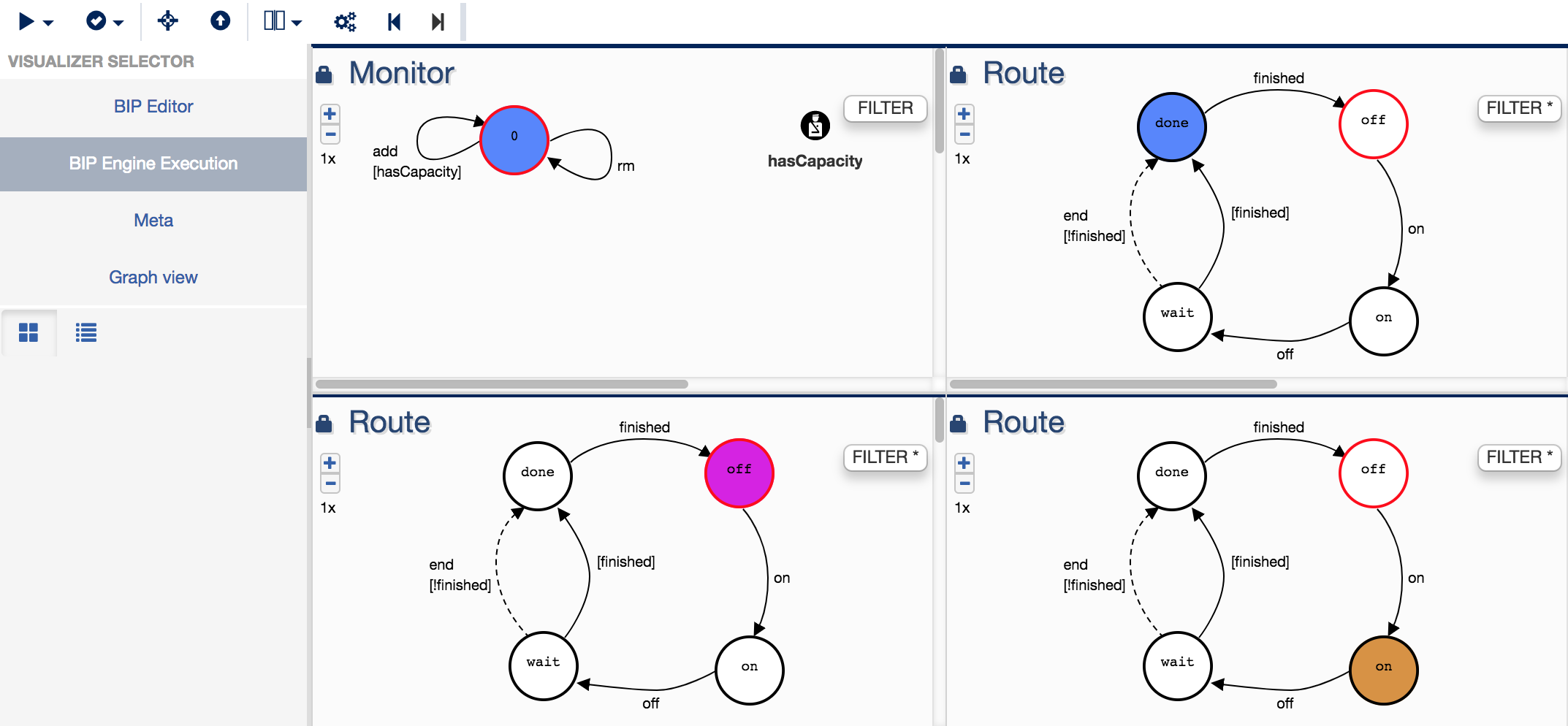}  \caption{Visualization of the execution of the \texttt{Switchable Routes example}}
  \label{fig:engine}
\end{figure}

\section{Related Work}
Model-driven component-based software engineering and development~\cite{beydeda2005model,heineman2001component,clemens1998component} has become an accepted practice for tackling software complexity in large-scale systems. It provides mechanisms to support design at the right level of abstraction, error detection, tool integration, verification and maintenance. Systems are built by composing and reusing small, tested building blocks called components.  

The Generic Modeling Environment (GME) and its successor WebGME are open source Model Integrated Computing (MIC) tools developed for creating domain specific modeling environments and has been effectively applied to a number of domains~\cite{stankovic2003vest, thramboulidis2005model, neema2016c2wt, beccani2015systematic}.

Close to our approach is the ROSMOD design studio~\cite{rosmod} that also relies on WebGME for collaborative code development and model editing features. The basic building blocks of ROSMOD are specified in its metamodel which is described in UML class diagrams~\cite{bell2003uml}. The code development and compilation process have been integrated in the graphical user interface to keep the framework self-sufficient. ROSMOD integrates code development, code generation, compilation, run-time monitoring, and execution time plot generation. Nevertheless, in ROSMOD component behavior is defined directly with code and thus connection to verification tools is not supported.

A plethora of approaches exists for architecture specification. Patterns~\cite{daigneau2011service,Hohpe:2003:EIP:940308} are commonly used for specifying architectures in practical applications. The specification of architectures is usually done in a graphical way using general purpose graphical tools. Such specifications are easy to produce but their meaning may not be clear since the graphical conventions lack formal semantics and thus are not amenable to formal analysis. 
Significant work has been done by the Architecture Description Languages (ADLs) community. Many ADLs have been developed for architecture specification~\cite{medvidovic2000classification,  ozkaya2013we} with rigorous semantics that facilitate communication of system properties and allow system analysis. 
Nevertheless, according to~\cite{malavolta2013industry}, architectural languages used in practice mostly originate from industrial development (e.g., UML) instead of academic research (e.g., ADLs).  Scientific questions remain about UML's formal properties~\cite{harel2004meaningful}. The use of UML has been demonstrated in~\cite{clements2002documenting, ivers2004documenting} for representing architectural concepts with a focus on the component and connector view. However, exploiting these constructors to express architecture views may result in a proliferation of models and stereotypes, which can be difficult to integrate into a well-structured code generation process. On the other hand, ADLs with formal semantics require the use of formal languages which are considered as challenging for practitioners to master~\cite{malavolta2013industry}. We chose architecture diagrams, which rely on a small set of notions and combine the benefits of graphical languages and rigorous formal semantics. 


\section{Conclusion}

We presented DesignBIP\footnote{https://cps-vo.org/group/BIP}, which is a web-based, open source design studio\footnote{https://github.com/anmavrid/DesignBIP} for modeling and generating BIP systems. 
To define system coordination aspects, we used a parameterized graphical language with formal semantics called architecture diagrams, which we extended with BIP coordination primitives. Designing and reusing models that are based on types and not on instances allowed us to cope with the issues of modeling complexity and size. We have implemented dedicated model/code editors, visualizers, as well as integrated the JavaBIP-engine. 
%
Additionally, we studied model transformations and implemented dedicated code generation plugins.  We have opted for generating code from high-level graphical structures to avoid tedious and error-prone development of Boolean formulas. Rooting the whole modeling and execution process in rigorous semantics allows the connection to checkers and analysis tools.  In the future, we are going to integrate data transfer information on connector motifs. We are also going to develop code generators for the BIP1 and BIP2 languages and  integrate verification tools.


\section*{Acknowledgements}
This research is supported by the National Science Foundation under award \# CNS-1521617. The authors would like to thank Hoang-Dung Tran and Venkataramana Nagarajan for helping with the implementation of DesignBIP.


\appendix
\section {First-order Interaction Logic and Require/Accept Macro-notation}
\label{secn:foil}

We assume that a finite set of component types $\cT = \{T_1, \dots, T_n\}$ is given. Instances of a component type $T$ have the same interface and behavior. We write $c \oftype T$ to denote that a component $c$ is of type $T$. We denote by $T.p$ the
\emph{port type} $p$, \ie a port belonging to the interface of type $T$. We write $\cT.P$ to denote the set of port types of all component types. We write $c.p$, for a component $c \oftype T$,  to denote the {\em port instance} of type $T.p$. 
Let $\phi$ denote any formula in PIL.
The first-order interaction logic (FOIL) is defined by the grammar:
\begin{align*}
\Phi ::= \true\ |\ \phi\ |\ \compi \Phi |\ \Phi \vee \Phi\ |\ \exists c:T\big(Pr(c) \big).\Phi
\,,
\end{align*}
Variable $c$ ranges over component instances and must occur in the scope of a quantifier. $Pr(c)$ is some set-theoretic predicate on $c$ (omitted when $Pr =true$).
We define the usual notation for the universal quantifier:
\[\forall c\oftype T \bigl(\Pr(c)\bigr). \Phi \bydef{=} \not \exists c\oftype T \bigl(\Pr(c)\bigr). \compi \Phi.\]

The semantics of FOIL is defined for closed formulas, where, for each variable in the formula, there is a quantifier over this variable in a higher nesting level. We assume that the finite set of component types $\cT = \{T_1,\dots,T_n\}$ is given. An architecture is a pair $\langle \mathcal{B}, \gamma\rangle$, where $\mathcal{B}$ is a set of component instances of types from $\cT$ and $\gamma$ is a set of interactions on the set of ports $P$ of these components. For quantifier-free formulas, the semantics is the same as for PIL formulas. For formulas with quantifiers, the satisfaction relation is defined as follows:
\begin{align*}
\langle \mathcal{B},\gamma \rangle &\models \exists c:T \big( Pr(c) \big).\Phi\,,
&&\text{iff  $\gamma \models \bigvee_{c':T \in B \wedge Pr(c')} \Phi[c'/c]$,}
\end{align*}
where $c':T$ ranges over all component instances of type $T \in \cT$ and $\Phi[c'/c]$ is obtained by replacing all occurrences of $c$ in $\Phi$ by $c'$.

\begin{example}
\label{ex:javabip:star}

The \emph{Star architecture style}, for any number of components of type $S$, can be expressed in FOIL as follows:
\begin{align*}
&\exists c \oftype C .\ \forall s \oftype S .\ (c.p\ s.q)\ \wedge\ \forall s' \oftype S\ (s \neq s').\ (\compi{s.q\ s'.q})\ \wedge\\ &\forall c': C (c = c').\ true\,.
\end{align*}
\end{example}

Two macros are used: 1)~the \textbf{Require} macro and 2)~the \textbf{Accept} macro to define synchronization constraints in JavaBIP. 

\textbf{Require} defines ports required for interaction. Let $T_1, T_2 \in \cT$ be two component types. For instance:
\begin{align*}
  &T_1.p \require T_2.q\ T_2.q\ ;\ T_2.r
  \enskip \equiv \enskip
  \forall c_1 \oftype T_1.\\ &\Big(\exists c_2, c_3 \oftype T_2.\ \forall c_4 \oftype T_2\ (c_2 \ne c_3 \ne c_4).\ \big( c_1.p\ \Rightarrow\ c_2.q\ c_3.q\ \non{c_4.q} \big)\\ &\bigvee\ \exists c_2 \oftype T_2.\ \forall c_3 \oftype T_2 (c_2 \ne c_3).\ \big( c_1.p\ \Rightarrow\ c_2.r\ \non{c_3.r} \big)\ 
  \Big)\,,
\end{align*}
means that, to participate in an interaction, each of the ports $p$ of component instances of type $T_1$ requires either the participation of \emph{precisely two} of the ports $q$ of component instances of type $T_2$ or one instance of $r$. Notice the semicolon in the macro that separates the two options. 

\textbf{Accept} defines optional ports for participation, \ie it defines the boundary of interactions. This is expressed by explicitly excluding from interactions all the ports that are not optional. Let $T_1, T_2 \in \cT$. For instance:
\begin{align*}
	\label{eq:accept}
	 &T_1.p \accept T_2.q  \enskip
	\equiv \enskip\\ &\forall c_1 \oftype T_1.\ \left(\bigwedge_{T.r \in \cT.P \setminus \{T_2.q\}}  \forall c \oftype T.\ (c_1.p \Rightarrow \non{c.r}) \right)\,.
\end{align*}

\section{The DesignBIP Metamodel}
\label{sec:metamodel}
\fig{metamodel} shows part of the DesignBIP metamodel as a UML class diagram. For ease of presentation, we omitted from \fig{metamodel} composite component types and exported ports. 
A \texttt{BIP\_project} contains component types, connector motifs , as well as formalized design patterns, \ie \emph{architecture styles} and component type repositories. A \texttt{BIP\_project} has three attributes: 1)~a \texttt{name}, 2)~a functionality \texttt{description} and 3)~the \texttt{engineOutput} in the form of a JSON object that can be used for visualization purposes.

 \begin{figure} [t]
	\centering
	\includegraphics[scale=0.7]{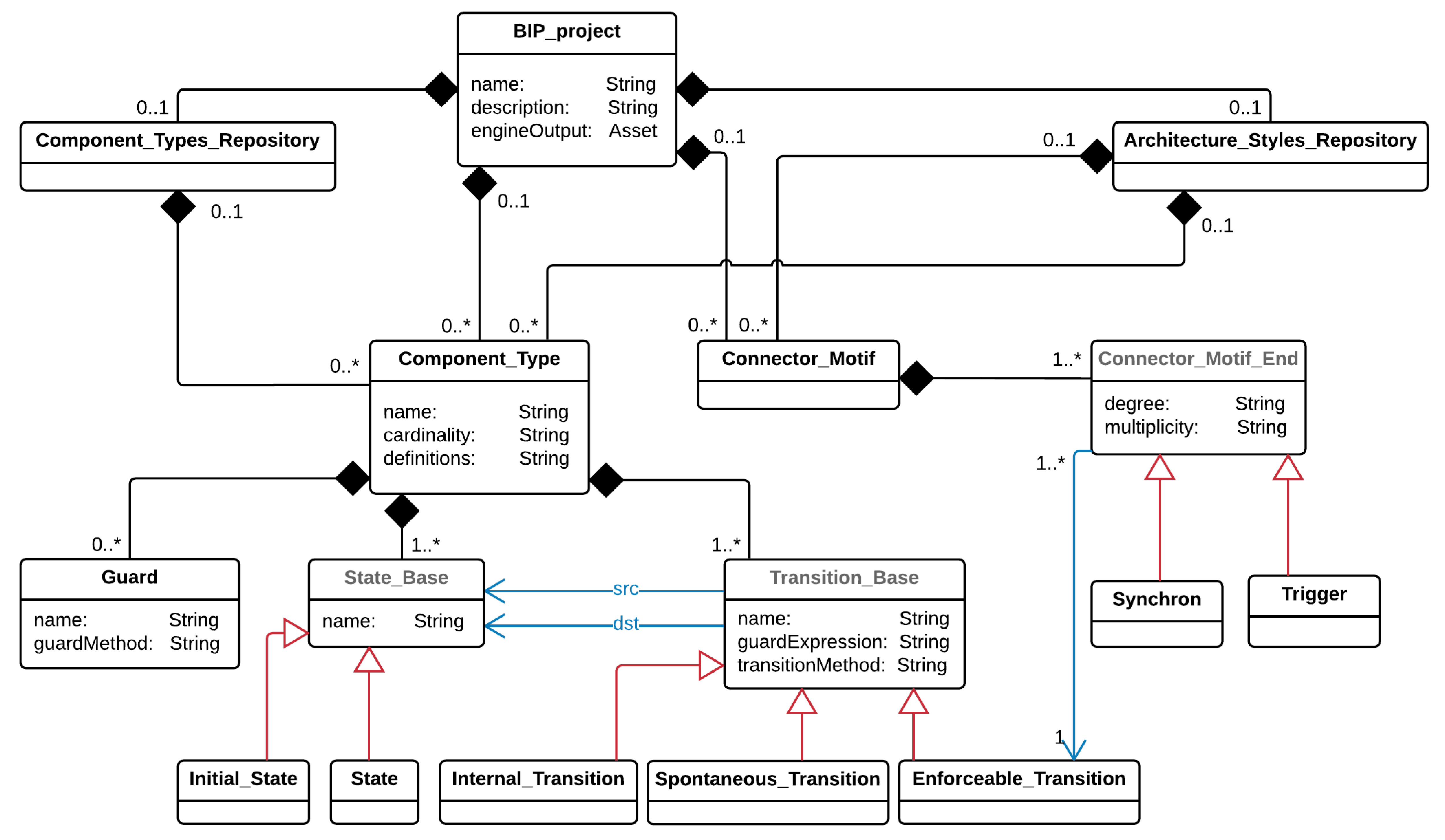}
	\caption{Part of the DesignBIP metamodel}  	
  	\label{fig:metamodel}
\end{figure}

A \texttt{BIP\_project} contains a set of \texttt{Component\_Type} elements that have three attributes: 1)~\texttt{name}, 2)~a \texttt{cardinality} parameter that denotes the number of component instances of this type and 3)~a \texttt{definition} of variables. The behavior of a \texttt{Component\_Type} is described, as explained in Section \ref{secn:bip}, by an LTS with three types of transitions: \texttt{Internal\_Transition}, \texttt{Spontaneous\_Transition} and \texttt{Enforceable\_Transition}. Enforceable transitions are exported as ports. The \texttt{State\_Base} and \texttt{Transition\_Base} elements are abstract. A transition is characterized by three attributes: 1)~\texttt{name}, 2)~a \texttt{guardExpression} and 3)~a \texttt{transitionMethod} that describes the action performed upon the execution of the transition. The \texttt{guardExpression} is a Boolean expression involving the conjunction `$\&$', disjunction `$|$' and negation `$!$' operators, parenthesis and the names of \texttt{Guard} elements. A \texttt{Guard} is evaluated through its \texttt{guardMethod} attribute.

Additionally, \texttt{BIP\_project} contains a set of \texttt{Connector\_Motif} elements that have two attributes 1)~\texttt{multiplicity} and 2)~\texttt{degree}. A \texttt{Connector\_Motif} contains a non-empty set of \texttt{Connector\_Motif\_End} elements which can be of two types: \texttt{Synchron} or \texttt{Trigger}. Each \texttt{Connector\_Motif\_End} is associated with a single component port, \ie an \texttt{Enforceable\_Transition}. More than one \texttt{Connector\_Motif\_End} can be associated with the same \texttt{Enforceable\_Transition}.

\section{Proof of Proposition~\ref{prop:srequirement}}
\label{secn:proof}

\begin{proof}[Proof of Proposition~\ref{prop:srequirement}]
Necessity $\rightarrow$: Assume that there exists a single architecture $\langle \cB, \gamma \rangle$ conforming to the diagram.  
Condition 1 is trivially obtained: $n_p < m_p$ cannot occur as the multiplicity of ports cannot be greater than the number of component instances. 
Notice that $\prod_{q \in a} {\binom{n_q}{m_q}}$ is equal to the number of different ways one could connect instances of port types in $a$, so that port $p \in a$ has $n_p$ instances and connector $k$ contains $m_{k,p}$ ports $p_i \in p$. Assume that the conforming architecture has $s$ connectors.  Then $\forall p \in a$, $s_p = s$, otherwise there would exist zero connectors. Also,  $s \leq \prod_{q \in a} {\binom{n_q}{m_q}}$ otherwise, by the pigeonhole principle there would exist duplicated connectors. Next, we show that if $s < \prod_{q \in a} {\binom{n_q}{m_q}}$, then the simple architecture diagram has multiple conforming architectures.
From $s < \prod_{q \in a} {\binom{n_q}{m_q}}$, it follows that there exists a connector $k^* \not\in \gamma$.
We are going to show how to construct another conforming architecture that contains $k^*$.
Let $k' \in \gamma$ be an arbitrary connector.
For each port type $p$ such that $n_p > 1$ and $k'$ and $k^*$ do not contain the same set of instances of $p$, we take the following steps.
Consider instances $p_i \in k' \setminus k^*$ and $p_j \in k^* \setminus k'$ of port type $p$.
From Lemma~\ref{lem:srequirement}, it follows that there exists a connector $k \in \gamma$ that contains $p_j$ but not $p_i$.
Then, we can transform the architecture by letting $k$ contain $p_i$ instead of $p_j$, and letting $k'$ contain $p_j$ instead of $p_i$.
Clearly, the architecture still conforms to the diagram after this transformation.
If $k'$ and $k^*$ still do not contain the same set of instances of $p$, then repeat the previous step with another pair of instances $p_i \in k' \setminus k^*$ and $p_j \in k^* \setminus k'$ of type $p$.
If they contain the same set of instances, then continue with another port type.
After we have finished with all port types, the resulting architecture still conforms to the diagram but now contains $k' = k^*$, which implies that it cannot be identical to the original architecture.
Therefore, we have shown that if $s < \prod_{q \in a} {\binom{n_q}{m_q}}$, then a simple architecture diagram has multiple conforming architectures, which concludes the proof of necessity.

Sufficiency $\leftarrow$:
We show that for each connector motif, there exists a unique configuration conforming to the architecture diagram.
Since each connector must contain $m_q$ instances of port type $q$ out of the existing $n_q$ instances, the number of possible connectors for a given motif is $\prod_{q \in a} {\binom{n_q}{m_q}}$.
Notice that since $m_p \leq n_p$ for every port type $p$, the set of possible connectors is non-empty.
A conforming configuration must have this many connectors according to our assumption $s_p = \prod_{q \in a} {\binom{n_q}{m_q}}$, which implies that a conforming configuration must contain all possible connectors.
Since this set is unique for each connector motif, the architecture itself must also be unique.
\end{proof}

\begin{lemma}
\label{lem:srequirement}
Let $k \in \gamma$ be a connector of an architecture $\langle \cB, \gamma \rangle$ conforming to a simple architecture diagram that contains instance $p_i$ but not instance $p_j$ of port type $p$. Then, there exists another connector $k' \in \gamma$ that contains instance $p_j$ but not instance $p_i$ of port type $p$.
\end{lemma}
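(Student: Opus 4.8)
The plan is a short counting argument resting on the degree constraint in the definition of conformance. Fix the connector motif $\cm=(a,\{m_q:d_q,\,t_q\}_{q\in a})$ to which the port type $p$ belongs, and let $\gamma_i \subseteq \gamma$ be the block of the conformance partition of $\gamma$ associated with $\cm$, so that (in the setting in which this lemma is applied) $k \in \gamma_i$. By condition~2 of conformance, \emph{every} instance of $p$ is involved in exactly $d_p$ connectors of $\gamma_i$; applying this to $p_i$ and to $p_j$, set
\[
A \;=\; \{\, c \in \gamma_i \mid p_i \in c \,\},
\qquad
B \;=\; \{\, c \in \gamma_i \mid p_j \in c \,\},
\]
so that $A$ and $B$ are finite and $|A| = |B| = d_p$.

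First I would suppose, towards a contradiction, that no connector of $\gamma_i$ contains $p_j$ but not $p_i$, \ie that every connector containing $p_j$ also contains $p_i$, \ie $B \subseteq A$. Since $A$ and $B$ are finite sets of the same cardinality, the inclusion $B \subseteq A$ forces $A = B$. But the hypothesis of the lemma states $p_i \in k$, $p_j \notin k$, with $k \in \gamma_i$, \ie $k \in A \setminus B$; this contradicts $A = B$. Hence there is a connector $k' \in \gamma_i \subseteq \gamma$ that contains $p_j$ but not $p_i$, which is the claim.

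There is essentially no obstacle; the single point worth flagging is that condition~2 of conformance must be read as constraining each port \emph{instance} (not merely each port type) to appear in exactly $d_p$ connectors of $\gamma_i$, which is how it is phrased. Note also that $d_p \ge 1$ is automatic here, since $k \in A$ witnesses $A \neq \emptyset$, so the cardinality count is not vacuous. This lemma is precisely the exchange tool invoked in the necessity direction of \prop{srequirement}: given $p_i \in k'$ and $p_j \in k^{*}$, it supplies a further connector $k \in \gamma$ with $p_j \in k$ and $p_i \notin k$, enabling the swap of $p_i$ for $p_j$ while preserving conformance.
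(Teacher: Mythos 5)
Your proof is correct and is essentially the paper's own argument: both assume for contradiction that every connector containing $p_j$ also contains $p_i$, and then use the degree constraint (each instance of $p$ lies in exactly $d_p$ connectors of the relevant block) together with the witness $k$ to conclude that $p_i$ would lie in strictly more connectors than $p_j$, a contradiction. You merely spell out the counting step more explicitly via the sets $A$ and $B$ and are careful to work inside the partition block $\gamma_i$, which the paper leaves implicit.
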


\begin{proof}
For the sake of contradiction, suppose that every connector that contains $p_j$ also contains~$p_i$.
Since $p_i$ is also contained by connector $k$, this would imply that $p_i$ is contained by more connectors than $p_j$, which contradicts the semantics of simple architecture diagrams.
Hence, the claim holds.
\end{proof}

\section{Generated Java Code for the Route Component Type}
\label{sec:RouteJava}

\begin{lstlisting}[style=customjava]
@Ports({
   @Port(name = "end", type = PortType.spontaneous),
   @Port(name = "on",  type = PortType.enforceable),
   @Port(name = "off", type = PortType.enforceable), 
   @Port(name = "finished", type = PortType.enforceable)
})
@ComponentType(initial = "off", name = "Route")
public class Route implements CamelContextAware {
   private CamelContext camelContext;
   private String routeId; 
   private int deltaMemory = 100; 

   public Route(String routeId, CamelContext camelContext) {
     this.routeId = routeId;
     this.camelContext = camelContext;
   } 
 
   @Transition(name = "on", source = "off", target = "on")
   public void startRoute() throws Exception {
     camelContext.resumeRoute(routeId);
   }
 
   @Transition(name = "off", source = "on", target = "wait")
   public void stopRoute() throws Exception {
     camelContext.suspendRoute(routeId);
   }
 
   @Transition(name = "end", source = "wait", 
   target = "done", guard = "!finished") 
   public void spontaneousEnd() {} 
 
   @Transition(name = "", source = "wait", 
   target = "done", guard = "finished")
   public void internalEnd() {}
 
   @Transition(name = "finished", source = "done", 
   target = "off")
   public void finishedTransition() {}
 
   @Guard(name = "finished")
   public boolean isFinished() {
     return camelContext.getInflightRepository().
       size(camelContext.getRoute(routeId).
       getEndpoint()) == 0;
   }  
}
\end{lstlisting}

\section{Part of the Generated XML code for the Switchable Routes Example}
\label{sec:RouteXML}

\begin{lstlisting}[style=customxml]
1  <require>
2    <effect id="on" specType="Route"/>
3    <causes>
4       <port id="add" specType="Monitor"/>
5    </causes>
6  </require>
7  <accept>
8    <effect id="on" specType="Route"/>
9    <causes>
10     <port id="add" specType="Monitor"/>
11  </causes>
12 </accept>
/*@ \vspace{-0.7\baselineskip} @*/
13 <require>
14    <effect id="add" specType="Monitor"/>
15    <causes>
16       <port id="on" specType="Route"/>
18    </causes>
19 </require>
20 <accept>
21   <effect id="add" specType="Monitor"/>
22   <causes>
23       <port id="on" specType="Route"/>
24   </causes>
25 </accept>
/*@ \vspace{-0.7\baselineskip} @*/
26 <require>
27    <effect id="off" specType="Route"/>
28    <causes>
29    </causes>
30 </require>
31 <accept>
32    <effect id="off" specType="Route"/>
33    <causes>
34    </causes>
35 </accept>
\end{lstlisting}

\section{Demonstration of DesignBIP}
We provide a step-by-step demonstration of the DesignBIP tool. In particular, we demonstrate how to model behavior and interaction, generate the Java and XML code and execute the system using the JavaBIP-engine. 

\subsection{Step 1: Presentation of the DesignBIP Environment}

DesignBIP is a web-based, open-source tool. The initial page of DesignBIP is shown in Figure~\ref{fig:tool1}. The developer may navigate to the Projects page by double-clicking on the \texttt{Projects} icon.

\begin{figure} [h]
\centering
\includegraphics[width=\textwidth]{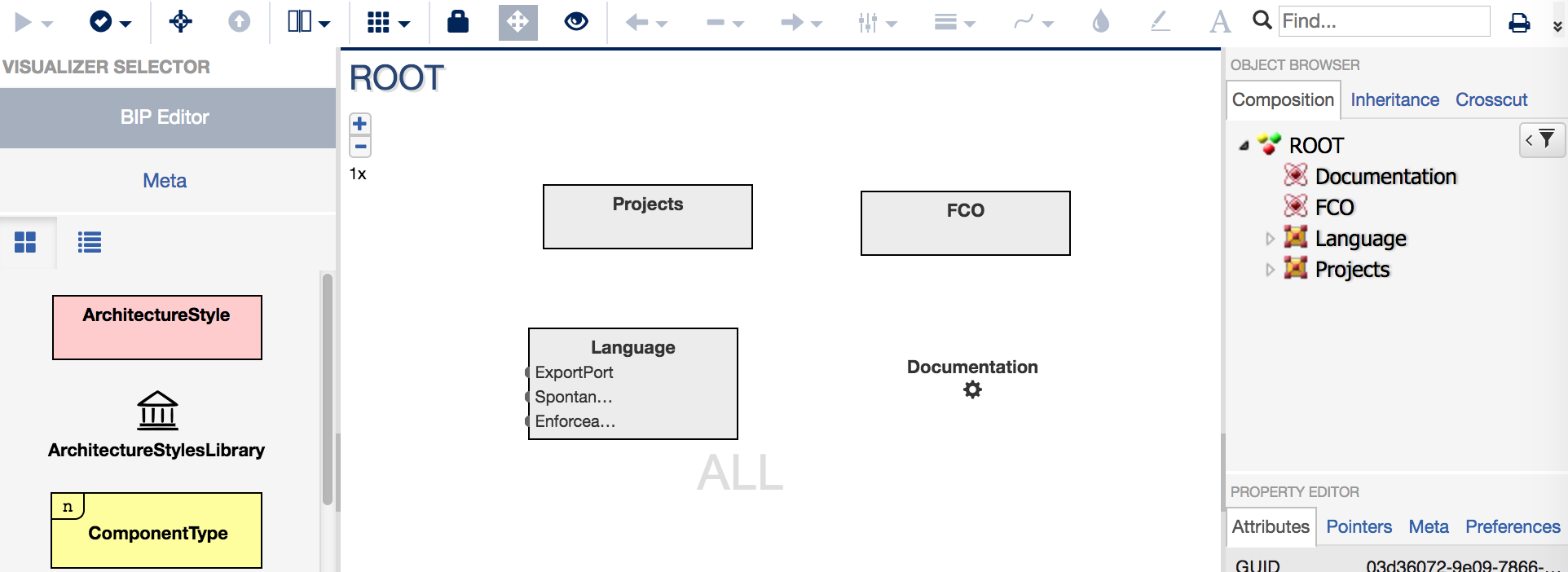}
\caption{Initial page of DesignBIP}
\label{fig:tool1}
\end{figure}

The \texttt{Projects} page is shown in Figure~\ref{fig:tool2}. The center panel is the Canvas. The left side of the screen shows the \texttt{Visualizer Selector} options and below them is the \texttt{Part Browser}, which displays the concepts than can be instantiated inside the Canvas. For instance, dragging and dropping a \texttt{Project} from the \texttt{Part Browser} creates a new project on the Canvas. The top right corner of the user interface is the \texttt{Object Browser}, which shows the composition hierarchy of DesignBIP starting at Root. Embeddable documentation can be added at every level of this hierarchy. Below the \texttt{Object Browser} is the \texttt{Property Editor}, where attributes, preferences, and other properties of the currently selected project can be edited.

\begin{figure} 
\centering
\includegraphics[width=\textwidth]{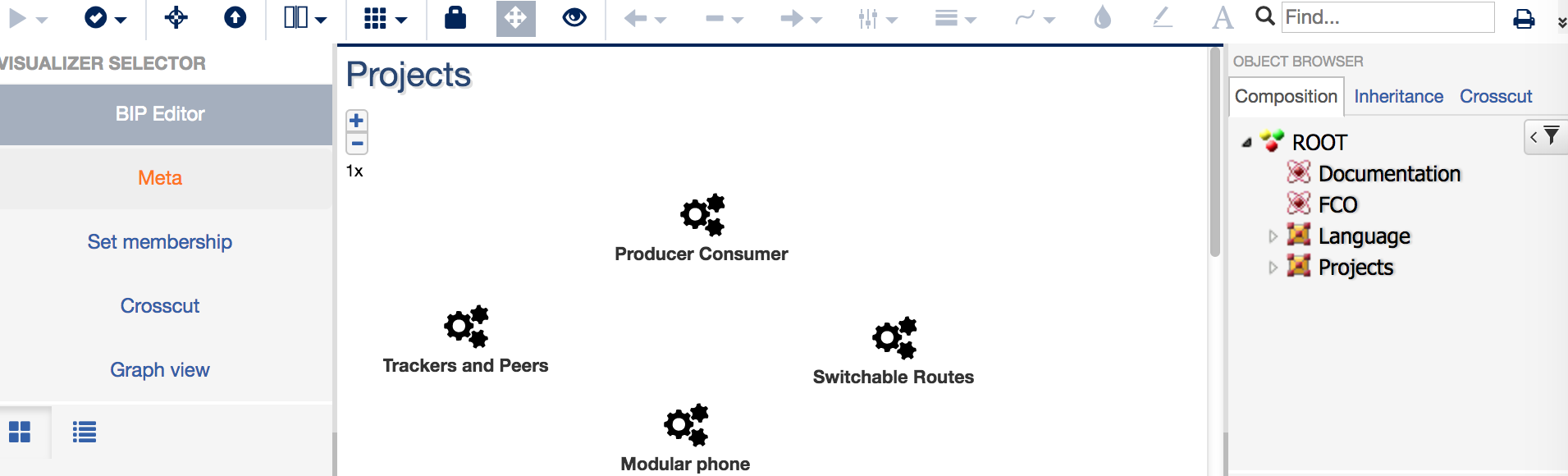}
\caption{The \texttt{Projects} page}
\label{fig:tool2}
\end{figure}

DesignBIP provides a collaborative and versioned environment. Multiple developers can collaborate on the same smart contract simultaneously. Changes are immediately broadcasted to all developers and everyone sees the same state. This is similar to how Google Docs works, except that  the models in DesignBIP have a much richer data model, which makes consistency management more challenging. 
Changes in DesignBIP are committed and versioned, which enables branching, merging, and viewing the history of a contract. Figure \ref{fig:version} shows the history of the \texttt{master} branch. During the demonstration, we are going to show how a developer can create a new branch and merge it into the master branch.

\begin{figure} 
\centering
\includegraphics[width=\textwidth]{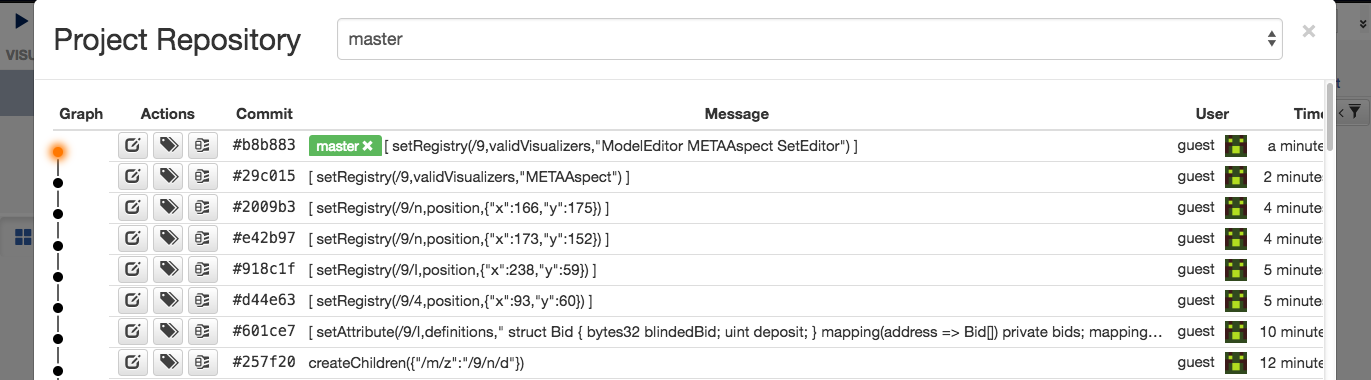}
\caption{Versioning in DesignBIP}
\label{fig:version}
\end{figure}

\subsection{Step 2: Presentation of the DesignBIP Language}

\begin{figure}
\centering
\includegraphics[width=\textwidth]{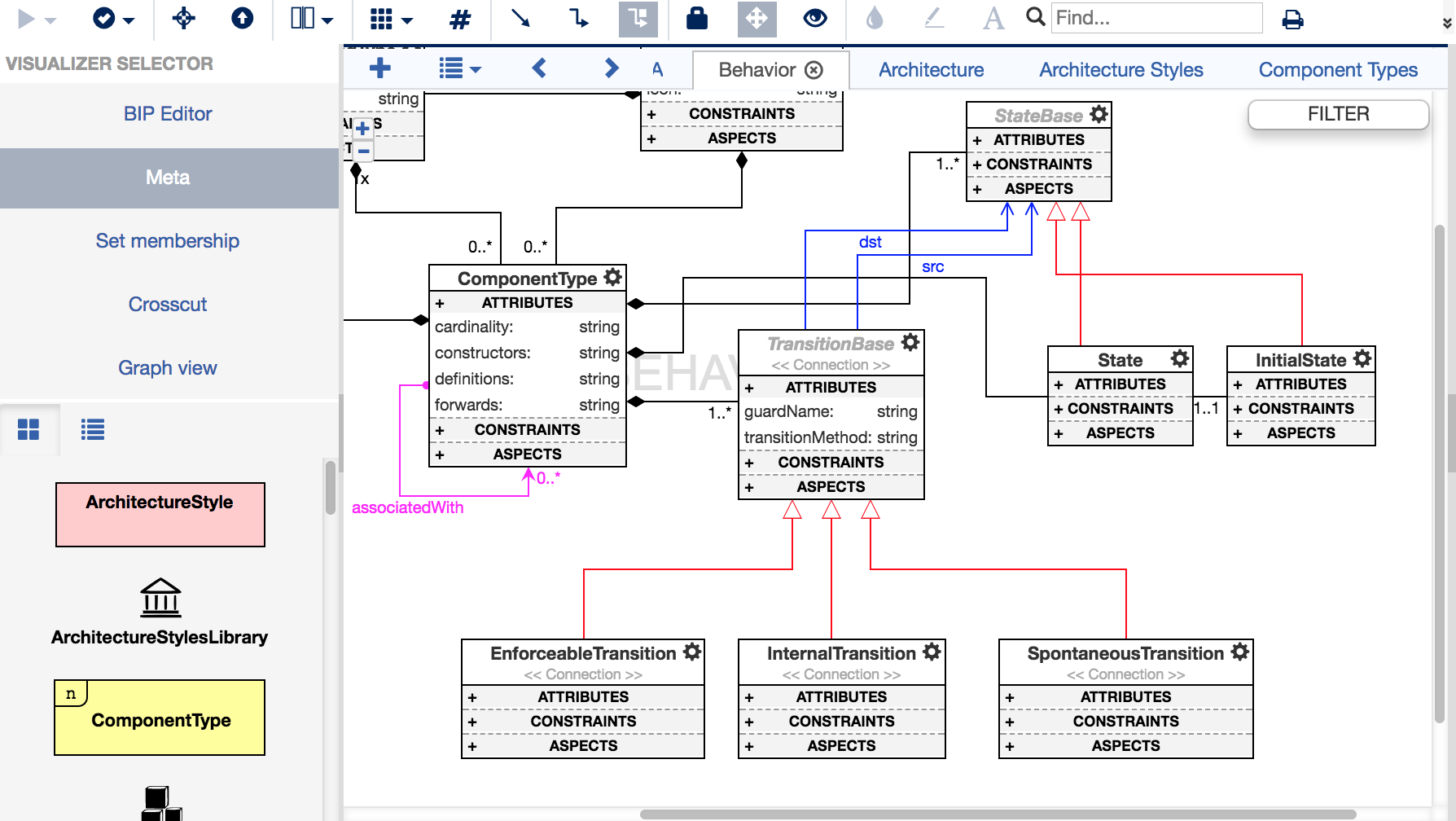}
\caption{The DesignBIP language}
\label{fig:meta}
\end{figure}

Next, we present some elements of the BIP language, which is defined in DesignBIP as a UML class diagram (Figure~\ref{fig:meta}). By double-clicking \texttt{Meta} in the \texttt{Visualizer Selector} shown in Figure \ref{fig:tool2}, we navigate to the language-specification of DesignBIP. As shown in Figure~\ref{fig:meta}, the behavior of a \texttt{ComponentType} is described as an LTS. The \texttt{State\_Base} element is abstract and it can be instantiated by either an \texttt{InitialState} or a \texttt{State}. Each contract must have exactly one \texttt{InitialState}, which is enforced by the cardinality of the containment relation. Similarly, the \texttt{Transition\_Base} element is abstract and it can be instantiated by either an \texttt{EnforceableTransition}, an \texttt{InternalTransition}, or an \texttt{SpontaneousTransition}. The developer can see other aspects of the DesignBIP metamodel by clicking the \texttt{Architecture}, \texttt{Architecture Styles} and \texttt{Component Types} tabs.

\subsection{Step 3: Designing a BIP System}

\begin{figure}
\centering
\includegraphics[width=\textwidth]{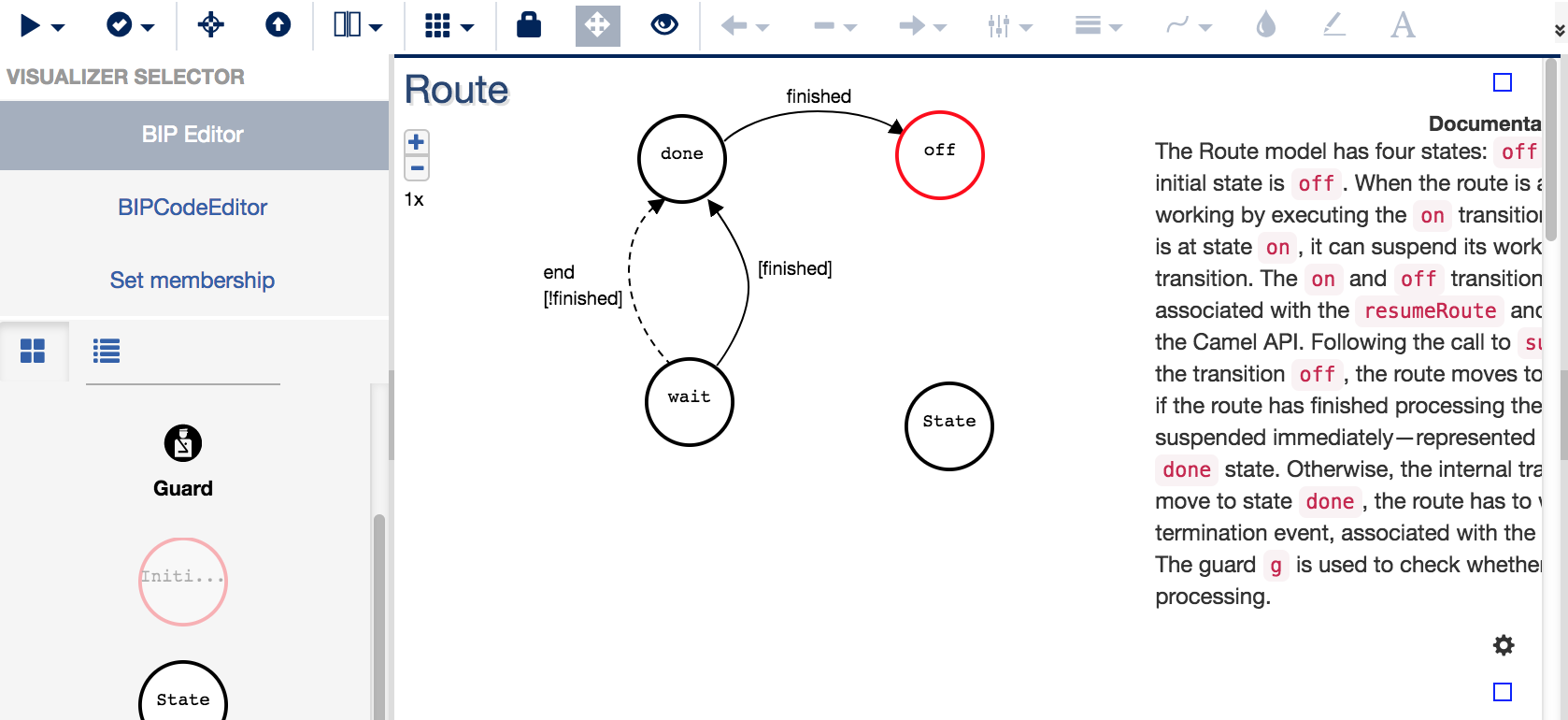}
\caption{Designing an LTS in DesignBIP}
\label{fig:tool4}
\end{figure}

Next, we present how to design a BIP system using DesignBIP. After creating a new \texttt{Project} on the canvas shown in Figure \ref{fig:tool2} (as described in Step 1), the developer must  specify a unique name for this project, e.g., \texttt{SwitchableRoutes}, in the \texttt{name} attribute of the Project listed in the \texttt{Property Editor}. Then, the developer creates the states of the LTS (Figure \ref{fig:tool4}) by drag-and-dropping states from the \texttt{Part Browser}. By clicking on a state, the developer may add a transition to another state or to the same state. For each transition, the developer may specify its corresponding attributes, i.e., \texttt{guardName} and \texttt{transitionMethod} at the \texttt{Property Editor} or at the dedicated Java code editor that we have developed. 
The developer may navigate to the code editor by double-clicking \texttt{BIPCodeEditor} in the \texttt{Visualizer Selector}. 

Figures~\ref{fig:tool4},\ref{fig:tool5}, and \ref{fig:tool6} show the model/code editors and the representation of the switchable routes project. From the LTS shown in the LTS model editor (Figure~\ref{fig:tool4}), we generate equivalent Java code, which consists in the darker parts of the code shown in the code editor (Figure~\ref{fig:tool5}). This part of the code is generated automatically from the LTS and cannot be altered by the developer in the code editor. The developer can only change the LTS in the model editor. The lighter parts of the code in the code editor represent parts of the code that the developer has directly defined in the code editor. For instance, the developer may specify variables, transition statements and guards, directly in the code editor. The code and model editors are tightly integrated. Once a developer makes a change in the model editor, the code editor is updated and vice versa.  Figure~\ref{fig:tool6}  shows the dedicated model editor for BIP architecture diagrams. As mentioned earlier, documentation can be added at every level of DesignBIP.

 \begin{figure}
 \centering
 \includegraphics[width=\textwidth]{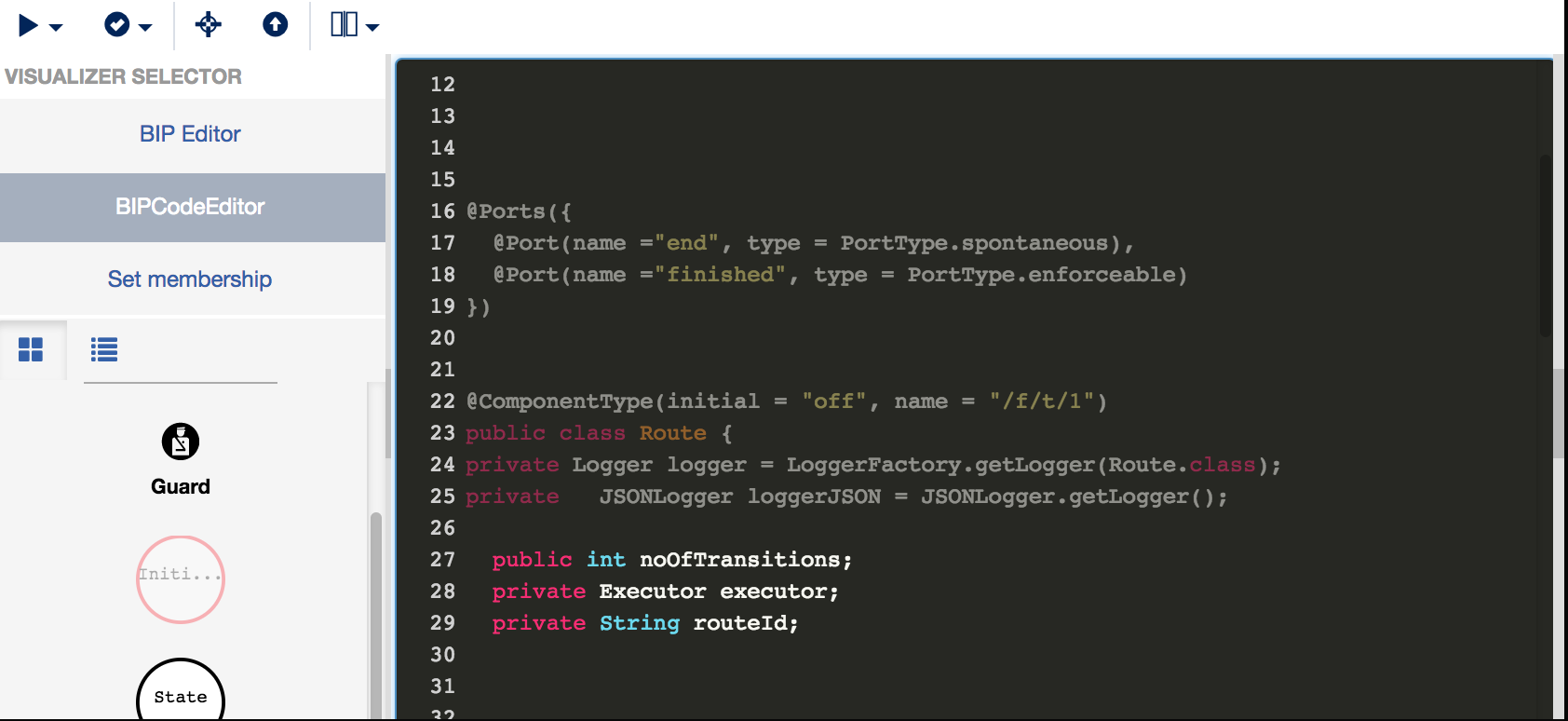}
 \caption{The DesignBIP code editor}
 \label{fig:tool5}
 \end{figure}
 
  \begin{figure}
 \centering
 \includegraphics[width=\textwidth]{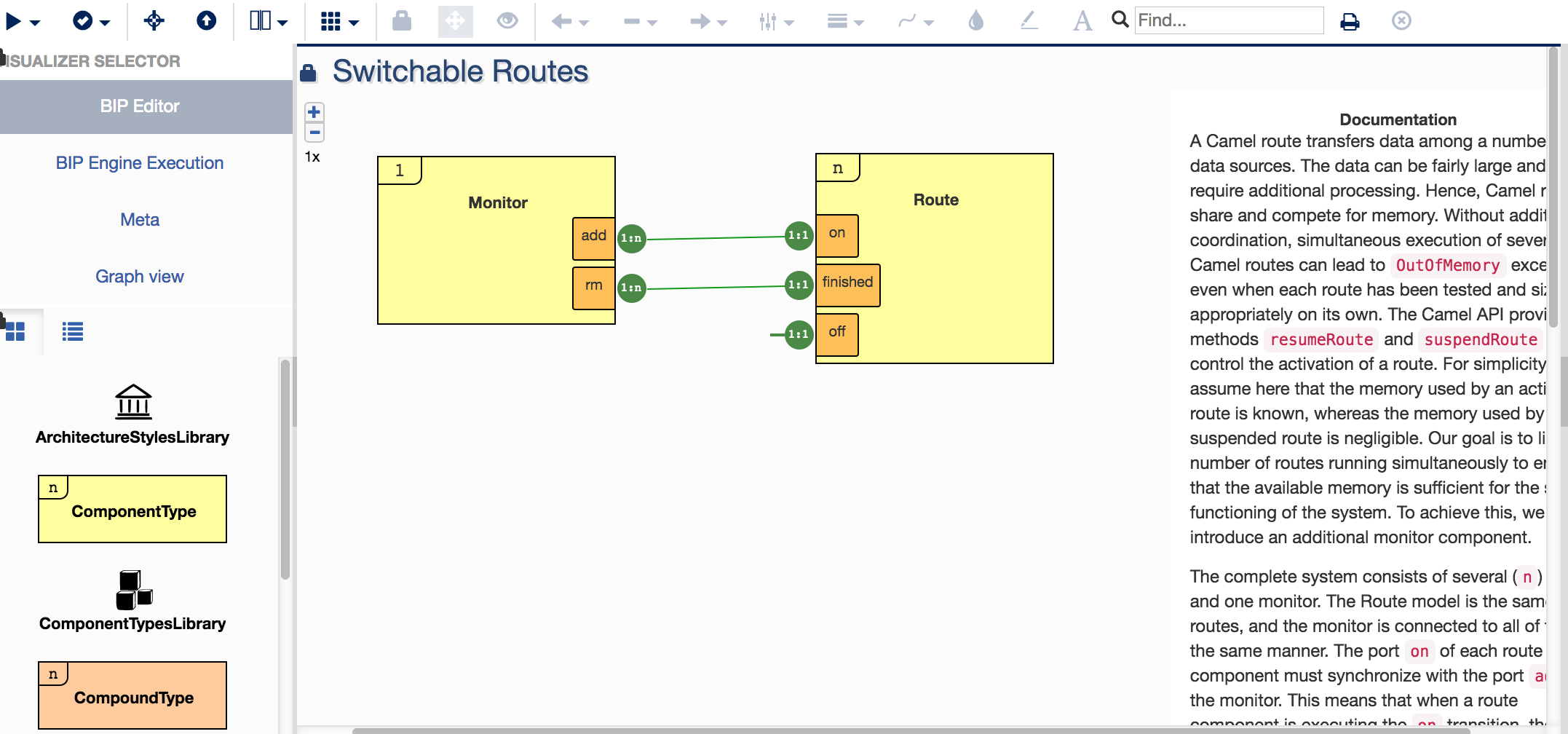}
 \caption{The BIP model editor}
 \label{fig:tool6}
 \end{figure}
 
\subsection{Step 4: Generating Java and XML Code}

To generate Java and XML code a developer must click on the \texttt{BehaviorCodeGenerator} and\\ \texttt{ArchitectureCodeGenerator}  plugins offered by the drop-down menu in the upper left corner of the tool (Figure \ref{fig:tool7}). Then, the widget shown in Figure \ref{fig:tool8} pops up, and the developer may click on the \texttt{Save \& Run} button to continue with the code generation.

 \begin{figure} 
 \centering
 \includegraphics[width=\textwidth]{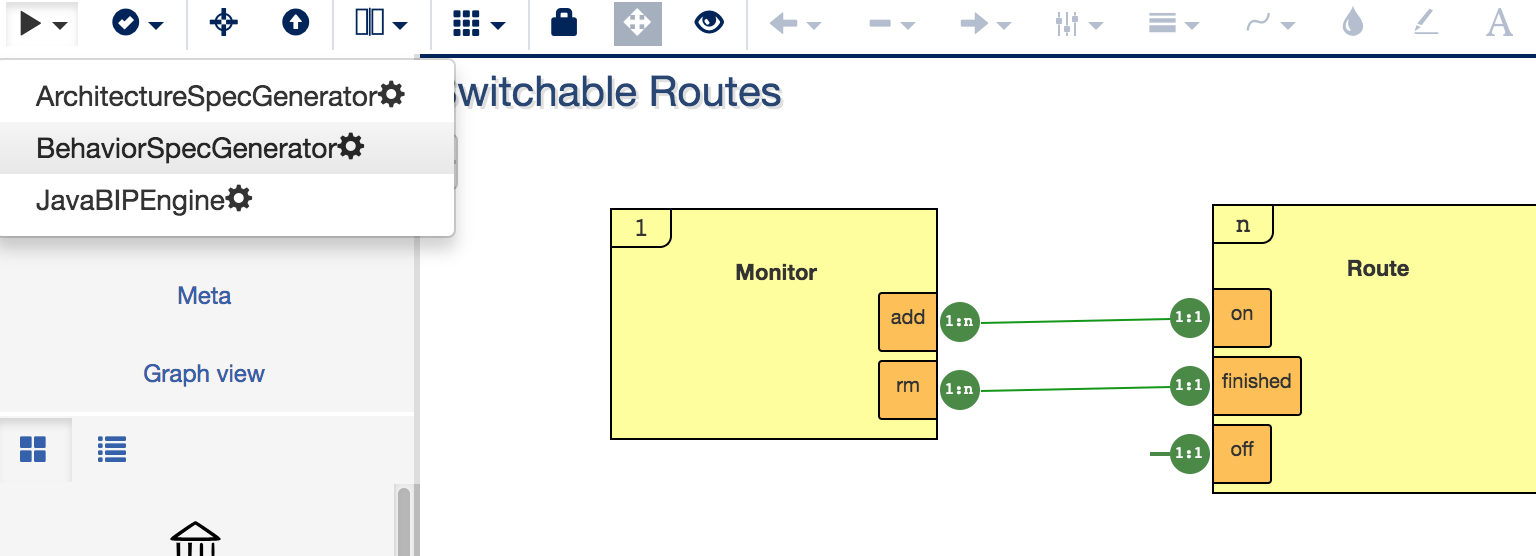}
 \caption{DesignBIP plugins}
 \label{fig:tool7}
 \end{figure}

If the code generation is successful, i.e., there are no specification errors in the given input, then the widget shown in Figure \ref{fig:tool9} pops-up. The developer may then click on the generated artifacts to download the generated Java or XML code. If, on the other hand, the code generation is not successful due to incorrect input, then, the widget shown in Figure \ref{fig:tool10} pops-up. As shown in Figure \ref{fig:tool10}, DesignBIP lists the errors found in the specification of the contract with detailed explanatory messages. Additionally, DesignBIP provides links (through \texttt{Show node}) that redirect the developer to the erroneous nodes of the contract.

 \begin{figure} 
 \centering
 \includegraphics[width=\textwidth]{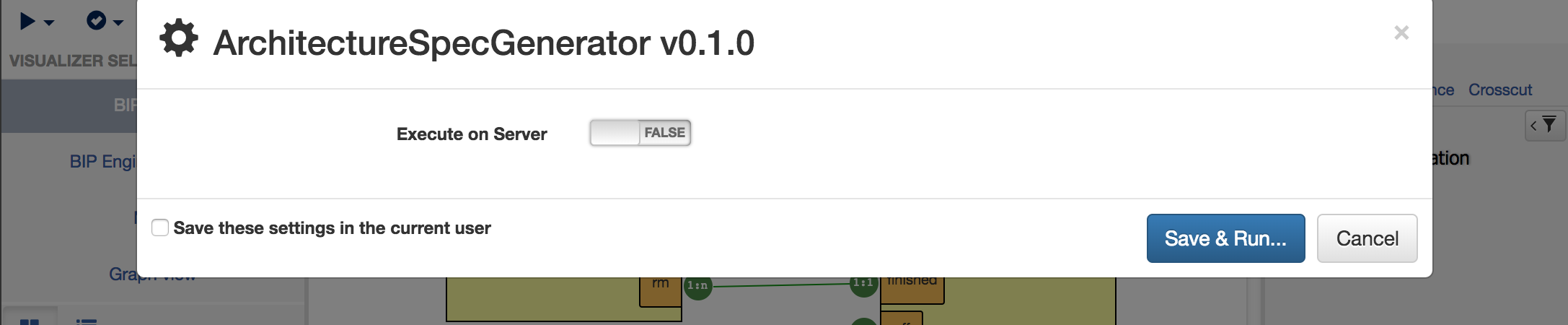}
 \caption{Widget for code generation plugins}
 \label{fig:tool8}
 \end{figure}
 
  \begin{figure}
 \centering
 \includegraphics[width=\textwidth]{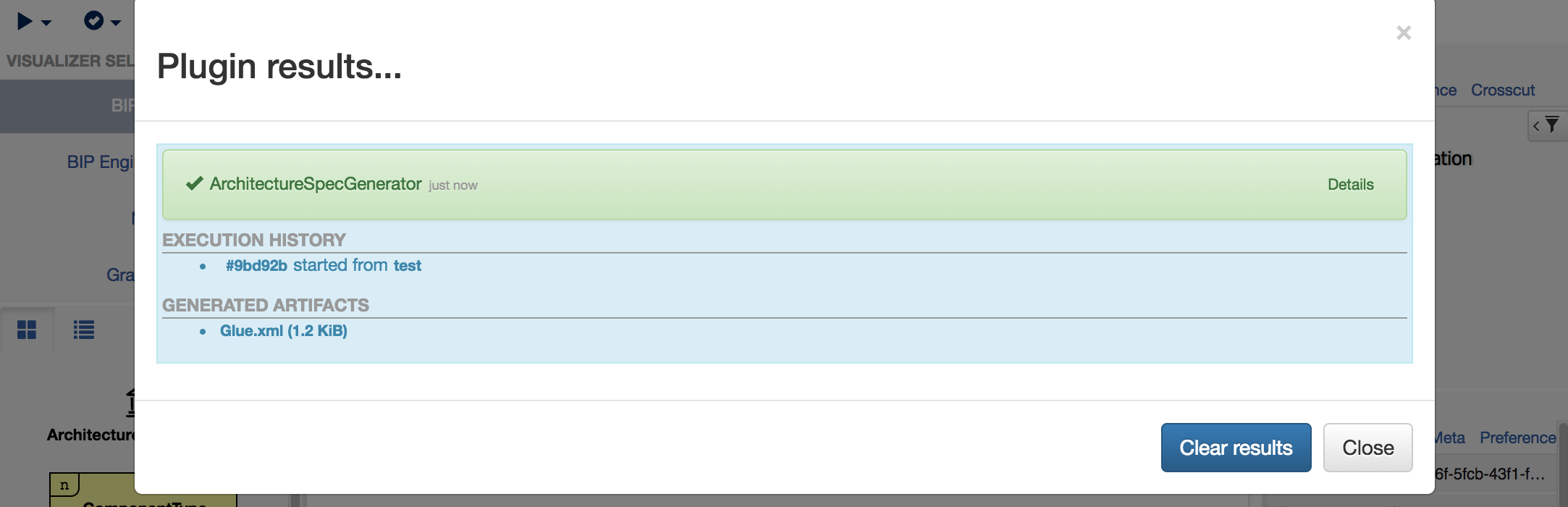}
 \caption{Successful code generation}
 \label{fig:tool9}
 \end{figure}
 
   \begin{figure}
 \centering
 \includegraphics[width=\textwidth]{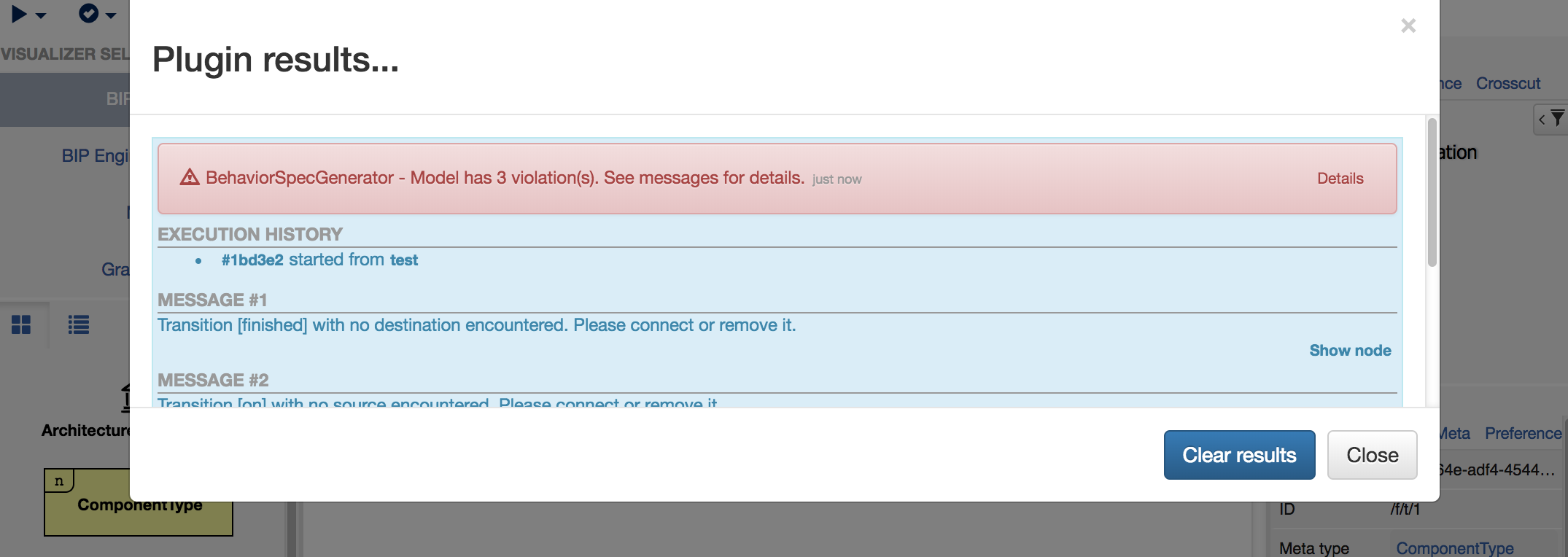}
 \caption{Unsuccessful code generation due to incorrect input}
 \label{fig:tool10}
 \end{figure}
 
    \begin{figure}
 \centering
 \includegraphics[width=\textwidth]{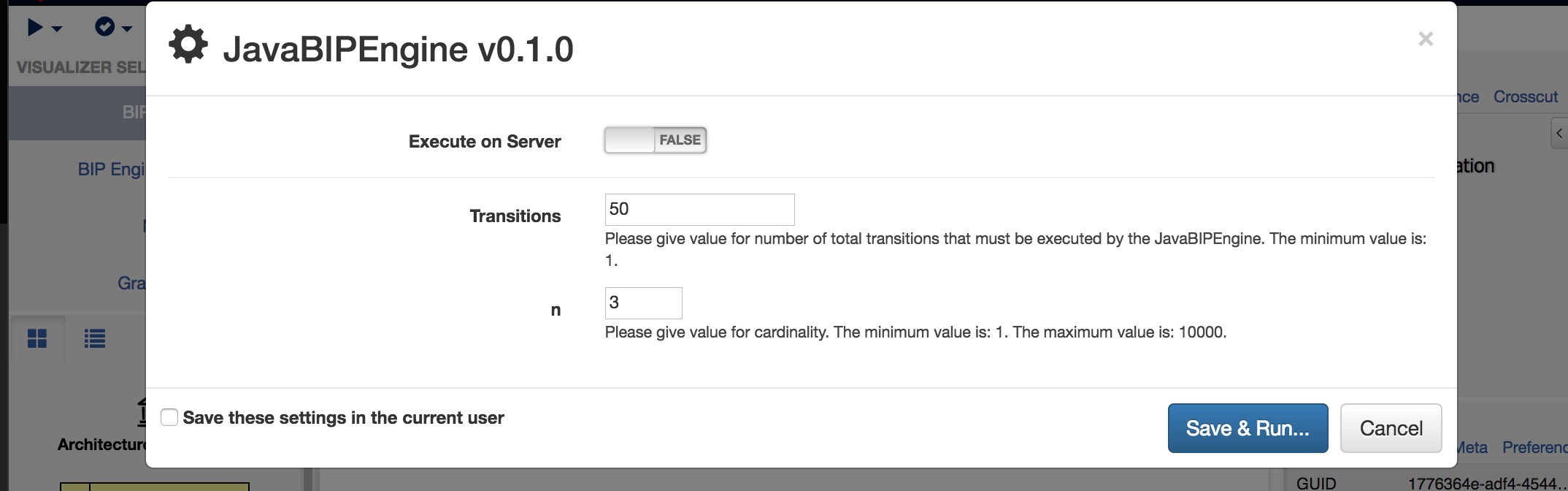}
 \caption{Widget for JavaBIP-engine plugin}
 \label{fig:tool10b}
 \end{figure}
 
    \begin{figure} 
 \centering
 \includegraphics[width=\textwidth]{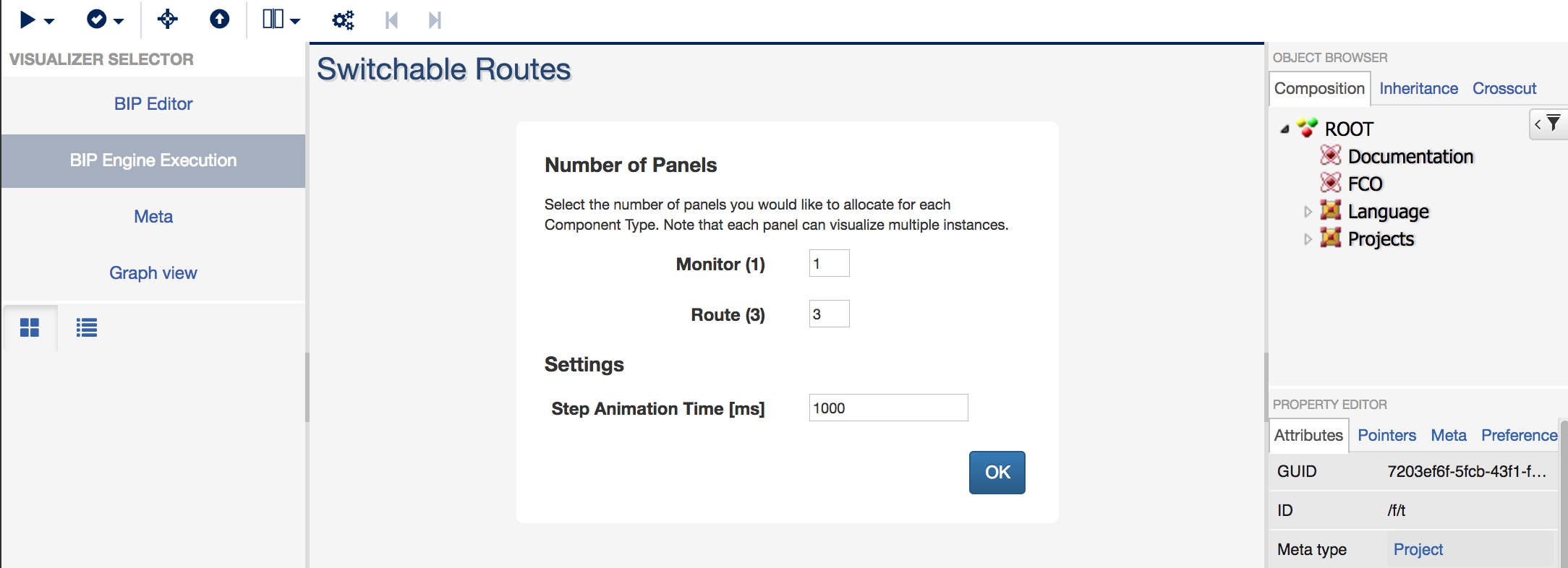}
 \caption{Visualization of the JavaBIP-engine output}
 \label{fig:tool11}
 \end{figure}

\subsection{Step 5: Running the Integrated JavaBIP-engine}

If the behavior and interaction code generation has finished successfully, the developer may execute the generated BIP system using the integrated JavaBIP-engine. To do that, the developer must click on the \texttt{JavaBIPEngine} plugin in the drop-down menu shown in Figure~\ref{fig:tool7}. Then, the developer must instantiate any parameters of the model in the widget shown in Figure~\ref{fig:tool10b} and also must specify how many transitions should be executed by the JavaBIP-engine. Then, the developer may click on the \texttt{Save \& Run } button to continue with the execution. Once the JavaBIP-engine has finished its execution, the developer may visualize the output of the engine by clicking \texttt{BIP Engine Execution} in the \texttt{Visualizer Selector} panel (see Figure~\ref{fig:tool6}). Then, the widget shown in Figure~\ref{fig:tool11} appears, where the developer specifies the number of instances of each component type to be visualized in the simulated output. Finally, the visualization of the output is as described in Section~\ref{secn:engine}.

\end{document}